\newtheorem{theorem}{Theorem}
\newtheorem{proposition}{Proposition}%[chapter]
\newtheorem{lemma}{Lemma}%[chapter]
\theoremstyle{remark}
\newtheorem{remark}{Remark}%[chapter]
\newcommand{\R}{\mathbb{R}}
\newcommand{\zero}{\boldsymbol{0}}
\renewcommand{\AA}{\boldsymbol{A}}
\newcommand{\ee}{\boldsymbol{e}}
\newcommand{\pp}{\boldsymbol{\pi}}
\newcommand{\xx}{\boldsymbol{x}}
\newcommand{\yy}{\boldsymbol{y}}
\newcommand{\M}{\mathcal{M}}
\newcommand{\X}{\mathcal{X}}
\newcommand{\diag}{\operatorname{diag}}
\newcommand{\supp}{\operatorname{supp}}
\newcommand{\mincost}{\operatorname{mincost}}
\title{Computing solutions of the multiclass network equilibrium problem with affine cost functions}
\author{Fr\'ed\'eric Meunier}
\author{Thomas Pradeau}
 \keywords{Affine cost functions; congestion externalities; hyperplane arrangement; Lemke algorithm; nonatomic games; transportation network}
\begin{document}

\begin{abstract}
We consider a nonatomic congestion game on a graph, with several classes of players. Each player wants to go from its origin vertex to its destination vertex at the minimum cost and all players of a given class share the same characteristics: cost functions on each arc, and origin-destination pair. Under some mild conditions, it is known that a Nash equilibrium exists, but the computation of an equilibrium in the multiclass case is an open problem for general functions. We consider the specific case where the cost functions are affine. We show that this problem is polynomially solvable when the number of vertices and the number of classes are fixed. In particular, it shows that the parallel-link case with a fixed number of classes is polynomially solvable. On a more practical side, we propose an extension of Lemke's algorithm able to solve this problem.
\end{abstract}

\maketitle

\section{Introduction}

\subsubsection*{Context} Being able to predict the impact of a new infrastructure on the traffic in a transportation network is an old but still important objective for transport planners. In 1952, \cite{Wa52} noted that after some while the traffic arranges itself to form an equilibrium and formalized principles characterizing this equilibrium. With the terminology of game theory, the equilibrium is a Nash equilibrium for a congestion game with nonatomic users. In 1956, \cite{Be56} translated these principles as a mathematical program which turned out to be convex, opening the door to the tools from convex optimization. The currently most commonly used algorithm for such convex programs is probably the Frank-Wolfe algorithm~\citep{FW56}, because of its simplicity and its efficiency, but many other algorithms with excellent behaviors have been proposed, designed, and experimented. 

One of the main assumptions used by Beckmann to derive his program is the fact that all users are equally impacted by the congestion. With the transportation terminology, it means that there is only one {\em class}. In order to improve the prediction of traffic patterns, researchers started in the 70s to study the {\em multiclass} situation where each class has its own way of being impacted by the congestion. Each class models a distinct mode of transportation, such as cars, trucks, or motorbikes. \cite{Da72,Da80} and \cite{Sm79} are probably the first who proposed a mathematical formulation of the equilibrium problem in the multiclass case. However, even if this problem has been the topic of many research works, an efficient algorithm for solving it is not known, except in some special cases~\citep{F77,H88,MM88,MW04}. In particular, there are no general algorithms in the literature for solving the problem when the cost of each arc is in an affine dependence with the flow on it. 

Our purpose is to discuss the existence of such algorithms.

\subsubsection*{Model}

We are given a directed graph $D=(V,A)$ modeling the transportation network. The set of all paths (resp. $s$-$t$ paths) is denoted by $\mathcal{P}$ (resp. $\mathcal{P}_{(s,t)}$). The population of {\em players} is modeled as a bounded real interval $I$ endowed with the Lebesgue measure $\lambda$, the {\em population measure}. The set $I$ is partitioned into a finite number of measurable subsets $(I^k)_{k\in K}$ -- the {\em classes} -- modeling the players with same characteristics: they share a same collection of cost functions $(c_a^k:\mathbb{R}_+\rightarrow\mathbb{R}_+)_{a\in A}$, a same origin $s^k$, and a same destination $t^k$. A player in $I^k$ is said to be of {\em class $k$}. We define $V^k$ (resp. $A^k$) to be the set of vertices (resp. arcs) reachable from $s^k$ in $D$. 

A {\em strategy profile} is a measurable mapping $\sigma:I\rightarrow\mathcal{P}$ such that $\sigma(i)\in\mathcal{P}_{(s^k,t^k)}$ for all $k\in K$ and all $i\in I^k$. We denote by $x_a^k$ the number of class $k$ players $i$ such that $a$ is in $\sigma(i)$: $$x_a^k=\lambda\{i\in I^k:\,a\in\sigma(i)\}.$$ The vector $\xx^k=(x_a^k)_{a\in A^k}$ is an $s^k$-$t^k$ flow of value $\lambda(I^k)$: for each $v\in V^k\setminus\{s^k,t^k\}$, we have $$\sum_{a\in\delta^+(v)}x_a^k=\sum_{a\in\delta^-(v)}x_a^k$$ and $$\sum_{a\in\delta^+(s^k)}x_a^k-\sum_{a\in\delta^-(s^k)}x_a^k=\sum_{a\in\delta^-(t^k)}x_a^k-\sum_{a\in\delta^+(t^k)}x_a^k=\lambda(I^k).$$ The vector $(\xx^k)_{k\in K}$ is thus a multiflow. The total number of players $i$ such that $a$ is in $\sigma(i)$ is $\sum_{k\in K}x_a^k$ and is denoted $x_a$. We denote by $\xx$ the vector $(x_a)_{a\in A}$.

The cost of arc $a$ for a class $k$ player is $c_a^k(x_a)$. For player, the cost of a path $P$ is defined as the sum of the costs of the arcs contained in $P$. Each player wants to select a minimum-cost path.

A strategy profile is a (pure) Nash equilibrium if each path is only chosen by players for whom it is a minimum-cost path. In other words, a strategy profile $\sigma$ is a Nash equilibrium if for each class $k\in K$ and each player $i\in I^k$ we have
$$\sum_{a\in\sigma(i)}c_a^k(x_a)=\min_{P\in\mathcal{P}_{(s^k,t^k)}}\sum_{a\in P}c_a^k(x_a).$$

This game enters in the category of {\em nonatomic congestion games with player-specific cost functions}, see \cite{Mi96}. Under mild conditions on the cost functions, a Nash equilibrium is always known to exist. The original proof of the existence of an equilibrium was made by \cite{Sc70} and uses a fixed point theorem. The proof of this result is also made in \citet{Mi00} or can be deduced from more general results \citep{Ra92}.

The problem of finding a Nash equilibrium for such a game is called the {\em Multiclass Network Equilibrium Problem}.

\subsubsection*{Contribution}

Our results concern the case when the cost functions are affine and strictly increasing: for all $k\in K$ and $a\in A^k$, there exist $\alpha_a^k\in\mathbb{Q}_+\setminus\{0\}$ and $\beta_a^k\in\mathbb{Q}_+$ such that $c_a^k(x)=\alpha_a^kx+\beta_a^k$ for all $x\in\R_+$. 

First, we prove the existence of a polynomial algorithm solving the Multiclass Network Equilibrium Problem when the number of classes and the number of vertices are fixed. The core idea of the algorithm relies on properties of hyperplane arrangements. A corollary of this theorem is that the {\em parallel-link} case (graph with parallel arcs between two vertices) is polynomially solvable for a fixed number of classes. This special case, even with only two classes, does not seem to have been known before.

Second, we show that there exists a pivoting algorithm solving the problem. This algorithm, inspired by the classical Lemke algorithm solving linear complementarity problems, is reminiscent of the network simplex algorithm that solves the minimum cost flow problem, in the sense that we exploit the presence of a graph to build the pivoting algorithm. The experiments show its efficiency.
On our track, we extend slightly the notion of basis used in linear programming and linear complementarity programming to deal directly with unsigned variables (hence without replacing them by twice their number of signed variables). 

To our knowledge, these two algorithms are the first specially designed to solve this problem.

We emphasize that the exact complexity of the problem remains unknown. The fact that it can be modeled as a linear complementarity problem implies that it belongs to the so-called PPAD class. The PPAD class is a class of problems for which an object is sought, here an equilibrium, while being sure that the object exists by an {\em a priori} argument equivalent to the following one: in a graph without isolated vertices and whose vertices all have at most one predecessor and at most one successor, if there is a vertex with at most one neighbor, there is another such vertex. This class was defined by~\cite{P94} and contains complete problems. An example of a PPAD-complete problem is the problem of computing a mixed Nash equilibrium in a bimatrix game (\cite{CDT09}). We do not know whether the Multiclass Network Equilibrium Problem with affine costs is PPAD-complete or not.

\subsubsection*{Related works} The single-class case is polynomially solvable since as soon as the cost functions are nondecreasing, the problem turns out to be a convex optimization problem, see~\cite{Be56}. This case has already been mentioned at the beginning of the introduction.

We are not aware of any algorithm with a controlled complexity for solving the Multiclass Network Equilibrium Problem, even with affine cost functions. There are however some papers proposing practical approaches. In general, the proposed algorithm is a Gauss-Seidel type diagonalization method, which consists in sequentially fixing the flows for all classes but one and solving the resulting single-class problem by methods of convex programming, see \cite{F77,FS82,H88,MM88} for instance. For this method, a condition ensuring the convergence to an equilibrium is not always stated, and, when there is one, it requires that ``the interaction between the various users classes be relatively weak compared to the main effects (the latter translates a requirement that a complicated matrix norm be less than unity)''~\citep{MM88}. Such a condition does clearly not cover the case with affine cost functions. Another approach is proposed by \cite{MW04}.  For cost functions satisfying the ``nested monotonicity'' condition -- a notion developed by \cite{CC88} -- they design a descent method for which they are able to prove the convergence to a solution of the problem. However, we were not able to find any paper with an algorithm solving the problem when the costs are polynomial functions, or even affine functions.

We can also mention generalization of the Lemke algorithm -- see for instance \cite{AV11,AEP79,CF96,CPS92,E73,SPS12} --  but none of them is specially designed for solving our problem, nor exploits a graph structure of any kind.

\subsubsection*{Structure of the paper} In Section~\ref{sec:formulation}, we provide mathematical features of the Multiclass Network Equilibrium Problem. In particular, we show with an elementary proof how to write it as a linear complementarity problem. 

Section~\ref{sec:poly} is devoted to one of our main results, namely the existence of a polynomial algorithm when the number of vertices and the number of classes are fixed. This section is subdivided into fours subsections. The first subsection -- Section~\ref{subsec:main} -- states the result and gives a general description of the algorithm. We provide then a brief introduction to the concept of hyperplane arrangement, which is used in the proofs (Section~\ref{subsec:hyperplan}). The following two sections (Sections~\ref{subsec:determine} and~\ref{subsec:compute}) are devoted to the two parts of the proof.

Section~\ref{sec:lemke} is devoted to the network Lemke-like algorithm. The first subsection -- Section~\ref{subsec:opt} -- shows how to rewrite the linear complementary problem formulation as an optimization program. Section~\ref{subsec:tools} presents the notions underlying the algorithm. All these notions, like {\em basis}, {\em secondary ray}, {\em pivot}, and so on, are classical in the context of the Lemke algorithm. They require however to be redefined in order to be able to deal with the features of our optimization program. The algorithm is then described in Section~\ref{subsec:lemke}. Section~\ref{subsec:experiments} is devoted to the experiments and shows the efficiency of the proposed approach.

\begin{remark}
A preliminary version of Section~\ref{sec:lemke} has been presented at the conference WINE 2013 \citep{MP13Lemke}.
\end{remark}

\section{Mathematical properties of the equilibrium}\label{sec:formulation}

Let $\yy = (y_a)_{a\in A}$ be a flow. We define its {\em support} as the set of arcs with a positive flow: $$\supp(\yy) = \{ a \in A:\; y_a >0\}.$$ The cost of a minimum-cost $s^k$-$v$ path when the arc costs are given by the $c_a^k(\yy)$'s is denoted $\pi_v^k(\yy)$. %This latter quantity is set to $+\infty$ if there is no $s^k$-$v$ paths in the graph. 
We denote by $\pp^k(\yy)$ the vector $(\pi_v^k(\yy))_{v\in V^k}$. We define $\mincost^k(\yy)$ to be the set of arcs in $A^k$ that are on some minimum cost paths originating at $s^k$. Formally, we have $$\mincost^k(\yy)=\{a=(u,v)\in A^k:\;\pi_v^k(\yy)-\pi_u^k(\yy)=c_a^k(y_a)\}.$$ 

\begin{proposition}\label{prop:inclusions}
The multiflow $(\xx^k)_{k\in K}$ is an equilibrium multiflow if and only if $$\supp(\xx^k)\subseteq\mincost^k(\xx)\quad\mbox{for all $k\in K$}.$$
\end{proposition}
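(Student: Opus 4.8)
The plan is to prove the two implications separately, reading ``equilibrium multiflow'' as ``multiflow induced by some Nash equilibrium strategy profile''. In both directions I would rely on one elementary observation: since the cost functions take nonnegative values, from any $s^k$-$v$ walk one can extract a simple $s^k$-$v$ path of no larger cost, so that $\pi_v^k(\yy)$ is also the minimum over walks; in particular, along a minimum-cost path every initial segment is itself of minimum cost.

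For the direct implication, I would start from a Nash equilibrium profile $\sigma$ inducing $(\xx^k)_{k\in K}$, fix $k\in K$, and take an arc $a=(u,v)\in\supp(\xx^k)$. Since $x_a^k>0$, the set $\{i\in I^k:\,a\in\sigma(i)\}$ has positive measure, hence is nonempty; picking such a player $i$, I write $\sigma(i)=Q_1\cdot a\cdot Q_2$ with $Q_1$ an $s^k$-$u$ path and $Q_2$ a $v$-$t^k$ path. By the Nash condition the cost of $\sigma(i)$ equals $\pi_{t^k}^k(\xx)$. If the cost of $Q_1$ were strictly larger than $\pi_u^k(\xx)$, one could replace $Q_1$ by a minimum-cost $s^k$-$u$ path, obtaining an $s^k$-$t^k$ walk of cost strictly smaller than $\sigma(i)$, and then extract from it a simple path, contradicting minimality; hence the cost of $Q_1$ is $\pi_u^k(\xx)$, and the same argument applied to $Q_1\cdot a$ gives $\pi_u^k(\xx)+c_a^k(x_a)=\pi_v^k(\xx)$, i.e. $a\in\mincost^k(\xx)$.

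For the converse, assume $\supp(\xx^k)\subseteq\mincost^k(\xx)$ for all $k\in K$. I would first observe that $\supp(\xx^k)$ is acyclic: summing the equalities $c_a^k(x_a)=\pi_v^k(\xx)-\pi_u^k(\xx)$ along a directed cycle contained in $\supp(\xx^k)$ yields $0$, whereas each arc of that cycle carries $x_a\ge x_a^k>0$ and thus has cost $\alpha_a^k x_a+\beta_a^k>0$. Consequently the classical flow decomposition of the $s^k$-$t^k$ flow $\xx^k$ involves only simple $s^k$-$t^k$ paths and no cycle: $\xx^k=\sum_{P}f_P^k\,\chi_P$ with $f_P^k\ge 0$, $\sum_P f_P^k=\lambda(I^k)$, where $\chi_P\in\{0,1\}^A$ is the incidence vector of $P$ and each such $P$ is contained in $\supp(\xx^k)$. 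Partitioning each $I^k$ into measurable sets of measures $f_P^k$ and routing the players of each part along the corresponding path defines a strategy profile inducing $(\xx^k)_{k\in K}$; and for a player routed on $P$, all arcs of $P$ lie in $\supp(\xx^k)\subseteq\mincost^k(\xx)$, so the equalities $c_a^k(x_a)=\pi_v^k(\xx)-\pi_u^k(\xx)$ telescope along $P$ to give its cost $\pi_{t^k}^k(\xx)-\pi_{s^k}^k(\xx)=\pi_{t^k}^k(\xx)=\min_{P'\in\mathcal{P}_{(s^k,t^k)}}\sum_{a\in P'}c_a^k(x_a)$, which is the Nash condition.

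The only genuinely delicate point is this converse direction: one must recover an honest strategy profile, that is, a measurable assignment of players to \emph{simple} paths, from a purely numerical multiflow. Here the acyclicity of $\supp(\xx^k)$ — which uses the strict positivity of the slopes $\alpha_a^k$ — is exactly what ensures that the flow decomposition produces no cycle, hence only legitimate strategies; note also that one cannot simply discard cyclic components of $\xx^k$, since this would alter $\xx$ and therefore the shortest-path potentials $\pi^k$. Everything else reduces to bookkeeping with nonnegative costs and telescoping sums.
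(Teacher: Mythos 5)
Your proof is correct. The forward direction is essentially the paper's argument, just spelled out: the paper disposes of it in one sentence (``if players use $a$ at equilibrium, $a$ is on a minimum-cost $s^k$-$t^k$ path''), implicitly invoking the subpath-optimality and walk-to-path extraction that you make explicit. The converse, however, is where you genuinely diverge. The paper treats the multiflow as already realized by some strategy profile: it takes any $s^k$-$t^k$ path used by a non-negligible set of class-$k$ players, notes that all its arcs then lie in $\supp(\xx^k)\subseteq\mincost^k(\xx)$, and telescopes to cost $\pi_{t^k}^k(\xx)$, so that any realizing profile is an equilibrium (up to a null set of players). You instead reconstruct an equilibrium profile from the numerical multiflow alone, proving acyclicity of $\supp(\xx^k)$ and then using a path-only flow decomposition together with a measurable partition of $I^k$. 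Your route buys strictly more: it establishes that the multiflow is realizable by an honest assignment of players to simple paths, a point the paper's converse silently presupposes, and it verifies the equilibrium condition for \emph{every} player rather than up to a negligible set. The price is that your acyclicity step relies on the strict positivity $\alpha_a^k x_a+\beta_a^k>0$ on arcs carrying flow, so it is tied to the affine, strictly increasing costs of the paper's main setting, whereas the proposition is stated in a section where the costs are only assumed to map $\R_+$ to $\R_+$; with a zero-cost cycle in the support your decomposition (and indeed the literal ``realizable by some profile'' reading of the statement) could fail. Your observation that one cannot simply discard cyclic components without perturbing $\xx$ and hence the potentials $\pp^k(\xx)$ is exactly the right caution and is absent from the paper.
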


\begin{proof}
If some class $k$ players use an arc $a$ at equilibrium, it means that this arc is on a minimum-cost $s^k$-$t^k$ path. We have thus the inclusion $\supp(\xx^k)\subseteq\mincost^k(\xx)$.

Conversely, suppose that $\supp(\xx^k)\subseteq\mincost^k(\xx)$ for all $k$. Take any $s^k$-$t^k$ path $P$ chosen by a non-negligible amount of class $k$ players.  Each arc $a=(u,v)$ in this path is in the support of $\xx^k$, and thus is such that $\pi_v^k(\xx)-\pi_u^k(\xx)=c_a^k(x_a)$. The cost of the path is therefore $\pi_{t^k}^k(\xx)$, which implies that $P$ is a minimum-cost $s^k$-$t^k$ path. $(\xx^k)_{k\in K}$ is a multiflow equilibrium.
\end{proof}

With a similar proof, we can get an alternate formulation of the equilibrium. Consider the following system, where $\boldsymbol{b}=(b_v^k)$ is a given vector with $\sum_{v\in V^k}b_v^k=0$ for all $k$. 
\begin{equation}\label{pb:MNEP-gen}\tag{$MNEP_{gen}$}
\begin{array}{lr}
  \displaystyle{\sum_{a\in \delta^+(v)} x_a^k = \sum_{a\in \delta^-(v)} x_a^k + b_v^k} &  v\in V^k, k\in K \\ \\
c^k_{uv}(x_{uv}) + \pi^k_u - \pi^k_v - \mu^k_{uv} = 0 & (u,v)\in A^k, k\in K   \\ \\
  x_a^k \mu_a^k = 0 & a\in A^k, k\in K  \\ \\
  \pi_{s^k}^k=0 & k\in K\\ \\
  x_a^k \geq 0, \mu_a^k \geq 0, \pi_v^k \in \R &  v\in V^k, a \in A^k, k\in K .
  \end{array}
\end{equation}

Finding solutions for systems like~\eqref{pb:MNEP-gen} is a {\em complementarity program}, the word ``complementarity'' coming from the condition $x_a^k \mu_a^k = 0$ for all $(a,k)$ such that $a \in A^k$. 

\begin{proposition}\label{prop:equilibrium}
Suppose that $b_v^k=0$ for $v\in V^k\setminus\{s^k,t^k\}$, $b_{s^k}^k=\lambda(I^k)$, and $b_{t^k}^k=-\lambda(I^k)$ for all $k$. Then
$(\xx^k)_{k\in K}$ is an equilibrium multiflow if and only if there exist $\boldsymbol{\mu}^k\in\R_+^{A^k}$ and $\boldsymbol{\pi}^k\in\R^{V^k}$ for all $k$ such that $(\boldsymbol{x}^k,\boldsymbol{\mu}^k,\boldsymbol{\pi}^k)_{k\in K}$ is a solution of the complementarity program~\eqref{pb:MNEP-gen}.
\end{proposition}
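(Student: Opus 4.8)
The plan is to deduce both directions from Proposition~\ref{prop:inclusions}: that proposition already characterizes equilibrium multiflows by the inclusions $\supp(\xx^k)\subseteq\mincost^k(\xx)$, and the auxiliary variables $\pp^k$ and $\mm^k$ in~\eqref{pb:MNEP-gen} are simply meant to record shortest-path potentials from $s^k$ and the slacks of the associated Bellman inequalities. So the whole argument is a translation between ``being on a minimum-cost path'' and ``having zero slack''.

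For the ``only if'' direction I would start from an equilibrium multiflow $(\xx^k)_{k\in K}$. By definition it is a multiflow with $\xx^k$ of value $\lambda(I^k)$, so the flow-conservation equations of~\eqref{pb:MNEP-gen} hold with the prescribed $\bb$. I would then set $\pi_v^k:=\pi_v^k(\xx)$ for $v\in V^k$ (finite, since $V^k$ is precisely the set of vertices reachable from $s^k$ and the costs are nonnegative) and $\mu_{uv}^k:=c_{uv}^k(x_{uv})+\pi_u^k-\pi_v^k$ for $(u,v)\in A^k$. Then $\pi_{s^k}^k=0$ and the arc equations hold by construction, $\mu_{uv}^k\ge 0$ is exactly the Bellman inequality $\pi_v^k(\xx)\le\pi_u^k(\xx)+c_{uv}^k(x_{uv})$, and if $x_a^k>0$ then $a\in\supp(\xx^k)\subseteq\mincost^k(\xx)$ by Proposition~\ref{prop:inclusions}, hence $\mu_a^k=0$ and $x_a^k\mu_a^k=0$. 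This produces a solution of~\eqref{pb:MNEP-gen}.

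For the ``if'' direction I would take a solution $(\xx^k,\mm^k,\pp^k)_{k\in K}$ and prove $\supp(\xx^k)\subseteq\mincost^k(\xx)$ for each $k$, which suffices by Proposition~\ref{prop:inclusions}. First, the arc equations together with $\mm^k\ge\zero$ give $\pi_v^k-\pi_u^k\le c_{uv}^k(x_{uv})$ for all $(u,v)\in A^k$; summing these along an $s^k$-$v$ walk and using $\pi_{s^k}^k=0$ yields $\pi_v^k\le\pi_v^k(\xx)$ for every $v\in V^k$. Next I would invoke the flow-decomposition of $\xx^k$ into finitely many $s^k$-$t^k$ paths and cycles whose arcs cover $\supp(\xx^k)$. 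A cycle $C$ is impossible: each arc $a$ of $C$ lies in $\supp(\xx^k)$, hence $\mu_a^k=0$ by complementarity, so $\pi_v^k-\pi_u^k=c_a^k(x_a)$ along $C$, and summing around $C$ gives $0=\sum_{a\in C}c_a^k(x_a)$, contradicting $c_a^k(x_a)=\alpha_a^kx_a+\beta_a^k\ge\alpha_a^kx_a^k>0$. Hence $\supp(\xx^k)$ is covered by $s^k$-$t^k$ paths, and along such a path $s^k=v_0,v_1,\dots,v_\ell=t^k$ I would show by induction that $\pi_{v_i}^k=\pi_{v_i}^k(\xx)$: it holds at $s^k$, and $\mu_{v_{i-1}v_i}^k=0$ gives $\pi_{v_i}^k=\pi_{v_{i-1}}^k+c_{v_{i-1}v_i}^k(x_{v_{i-1}v_i})=\pi_{v_{i-1}}^k(\xx)+c_{v_{i-1}v_i}^k(x_{v_{i-1}v_i})\ge\pi_{v_i}^k(\xx)$, which together with $\pi_{v_i}^k\le\pi_{v_i}^k(\xx)$ forces equality. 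Then for any $a=(u,v)\in\supp(\xx^k)$ we get $\pi_u^k=\pi_u^k(\xx)$, $\pi_v^k=\pi_v^k(\xx)$ and $\mu_a^k=0$, so $\pi_v^k(\xx)-\pi_u^k(\xx)=c_a^k(x_a)$, i.e.\ $a\in\mincost^k(\xx)$.

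The bookkeeping of the conservation equations and of the Bellman inequalities is routine; the one genuinely delicate point is the ``if'' direction, where the information carried by the complementarity conditions must be propagated along $\supp(\xx^k)$ starting from the anchor $\pi_{s^k}^k=0$. The flow-decomposition is exactly what makes this propagation possible, by guaranteeing that every support arc is reachable from $s^k$ inside $\supp(\xx^k)$; and the strict positivity of the affine cost functions (the standing hypothesis $\alpha_a^k>0$) is what rules out the cycles that would otherwise obstruct the argument.
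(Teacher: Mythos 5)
Your proof is correct, and the ``only if'' direction is exactly the paper's: set $\pi_v^k:=\pi_v^k(\xx)$, $\mu_{uv}^k:=c_{uv}^k(x_{uv})+\pi_u^k-\pi_v^k$, and read off feasibility and complementarity from Proposition~\ref{prop:inclusions}. The ``if'' direction, however, takes a genuinely different and heavier route. The paper never identifies $\pp^k$ with the true shortest-path potentials and never decomposes the flow: it simply telescopes the arc equations along an arbitrary $s^k$-$t^k$ path $P$ to get $\sum_{a\in P}c_a^k(x_a)=\pi_{t^k}^k+\sum_{a\in P}\mu_a^k\ge\pi_{t^k}^k$, with equality when $P$ lies in $\supp(\xx^k)$ (complementarity kills the $\mu$'s there), so every support path is a minimum-cost path and the profile is an equilibrium. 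You instead prove the arc-level inclusion $\supp(\xx^k)\subseteq\mincost^k(\xx)$ and then invoke Proposition~\ref{prop:inclusions}; this forces you to show $\pi_v^k=\pi_v^k(\xx)$ on the support, which in turn requires a path-plus-cycle decomposition and an argument excluding cycles. That exclusion step is where your proof narrows the scope: you use $c_a^k(x_a)=\alpha_a^kx_a+\beta_a^k>0$, i.e.\ the affine strictly increasing hypothesis, whereas Proposition~\ref{prop:equilibrium} is stated for the general program~\eqref{pb:MNEP-gen} with arbitrary nonnegative cost functions, for which a zero-cost cycle cannot be ruled out this way. (To be fair, a flow containing a cycle is not induced by any strategy profile, so the decomposability issue is latent in the paper's one-line conclusion as well; you confront it explicitly, but at the price of the extra hypothesis.) In exchange, your version delivers the slightly stronger arc-level statement and makes the role of the potentials transparent; the paper's version is shorter, needs no flow decomposition, and works verbatim for general nonnegative costs.
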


\begin{proof}
Let $(\xx^k)_{k\in K}$ be a multiflow equilibrium. We define $\pi_v^k$ to be $\pi_v^k(\xx)$. Finally, $\mu^k_{uv}$ is defined to be $c^k_{uv}(x_{uv}) + \pi^k_u - \pi^k_v$ for all $k\in K$ and $(u,v)\in A^k$. This solution is a feasible solution of the program~\eqref{pb:MNEP-gen} (using Proposition~\ref{prop:inclusions} to get the complementary conditions).

Conversely, take a feasible solution of the program~\eqref{pb:MNEP-gen}. Let $P$ be any $s^k$-$t^k$ path. We have $\sum_{a\in P}c_a^k(x_a)=\pi_{t^k}^k+\sum_{a\in P}\mu_a^k$. Thus  $\sum_{a\in P}c_a^k(x_a)\geq\pi_{t^k}^k$, with equality when the path $P$ is in $\supp(\xx^k)$. Any $s^k$-$t^k$ path in $\supp(\xx^k)$ is thus a minimum-cost $s^k$-$t^k$ path. Hence $(\xx^k)_{k\in K}$ is an equilibrium multiflow.
\end{proof}

When the cost functions are affine $c_a^k(x)=\alpha_a^kx+\beta_a^k$, solving the Multiclass Network Equilibrium Problem amounts thus to solve the following linear complementarity problem

\begin{equation}\tag{$MNEP$}\label{pb:MNEP}
\begin{array}{lr}
\displaystyle{\sum_{a\in \delta^+(v)} x_a^k = \sum_{a\in \delta^-(v)} x_a^k + b_v^k} &  v\in V^k, k\in K\\ \\
\displaystyle{\alpha_{uv}^kx_{uv} + \pi^k_u - \pi^k_v - \mu^k_{uv} = - \beta_{uv}^k} &  (u,v) \in A^k, k\in K \\ \\
  x_a^k \mu_a^k = 0 &  a\in A^k, k\in K \\ \\
    \pi_{s^k}^k=0 & k\in K\\ \\
  x_a^k \geq 0, \mu_a^k \geq 0, \pi_v^k \in \R &  v\in V^k, a\in A^k, k\in K
  \end{array}
\end{equation}
with $b_v^k=0$ for $v\in V^k\setminus\{s^k,t^k\}$, and $b_{s^k}^k=\lambda(I^k)$ and $b_{t^k}^k=-\lambda(I^k)$ for all $k$.

\section{A polynomial algorithm}\label{sec:poly}

\subsection{The algorithm}\label{subsec:main}

We describe the algorithm solving the Multiclass Network Equilibrium Problem in polynomial time when the number of classes and the number of vertices are fixed. 

Let $\AA=\{(S^k)_{k\in K}:\;S^k\subseteq A^k\}$. The algorithm consists in two steps.
\begin{enumerate}
 \item It computes a set $\mathcal S  \subseteq \AA$ of polynomial size such that for any equilibrium multiflow $(\xx^k)_{k\in K}$, there is a $(S^k)_{k\in K} \in \mathcal S$ with $\supp(\xx^k) \subseteq S^k \subseteq \mincost^k(\xx)$ for all $k$. 
  \item It tests for every $(S^k)_{k\in K} \in \mathcal S$ whether there exists an equilibrium multiflow $(\xx^k)_{k\in K}$ with $\supp(\xx^k) \subseteq S^k \subseteq \mincost^k(\xx)$
  for all $k$, and compute it if it exists. 
\end{enumerate}
For fixed $|K|$ and $|V|$, each step can be done in polynomial time according respectively to Proposition~\ref{prop:poly_determine} and Proposition~\ref{prop:poly_compute}.

\begin{proposition} \label{prop:poly_determine} Assume $|K|$ and $|V|$ being fixed. We can determine in polynomial time a set $\mathcal S\subseteq \AA$ of polynomial size such that for any equilibrium multiflow $(\xx^k)_{k\in K}$, there is a $(S^k)_{k\in K} \in \mathcal S$ with $\supp(\xx^k) \subseteq S^k \subseteq \mincost^k(\xx)$ for all $k$.
\end{proposition}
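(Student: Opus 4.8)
The plan is to prove that along any equilibrium the total flow on each arc is a fixed piecewise-affine function of the shortest-path potentials, so that the sandwich sets we look for depend only on which cell of a suitable hyperplane arrangement the potential vector lies in; that arrangement lives in a space of dimension $O(|K|\,|V|)$, which makes its cells polynomially many.

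First I would isolate the key identity. Fix an equilibrium multiflow $(\xx^k)_{k\in K}$, write $\pp^\star=(\pp^k(\xx))_{k\in K}$ for its vector of shortest-path potentials, and, for a generic potential vector $\pp=(\pp^k)_{k\in K}$ and $a=(u,v)\in A^k$, set
\[ \xi^k_a(\pp):=\frac{\pi^k_v-\pi^k_u-\beta^k_a}{\alpha^k_a}, \]
an affine form in $\pp$ (well defined because $\alpha^k_a>0$). I claim that for every arc $a$,
\[ x_a=\max\Bigl(0,\ \max_{k\,:\,a\in A^k}\xi^k_a(\pp^\star)\Bigr). \]
``$\geq$'' is immediate: $x_a\geq 0$, and for each $k$ with $a=(u,v)\in A^k$ the triangle inequality $\pi^k_v(\xx)\leq\pi^k_u(\xx)+c^k_a(x_a)$ rewrites as $x_a\geq\xi^k_a(\pp^\star)$. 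For ``$\leq$'': if $x_a=0$ there is nothing to prove; otherwise some class $k_0$ has $x^{k_0}_a>0$, so $a\in\supp(\xx^{k_0})\subseteq\mincost^{k_0}(\xx)$ by Proposition~\ref{prop:inclusions}, i.e. $c^{k_0}_a(x_a)=\pi^{k_0}_v-\pi^{k_0}_u$, i.e. $x_a=\xi^{k_0}_a(\pp^\star)$, which is at most the right-hand side. Since $c^k_a$ is affine, $\mincost^k(\xx)=\{a=(u,v)\in A^k:\ x_a=\xi^k_a(\pp^\star)\}$, and combining this with the identity yields
\[ \mincost^k(\xx)=\Bigl\{\,a\in A^k:\ \xi^k_a(\pp^\star)\geq 0\ \text{ and }\ \xi^k_a(\pp^\star)\geq\xi^{k'}_a(\pp^\star)\ \text{ for all }k'\text{ with }a\in A^{k'}\,\Bigr\}. \]

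Thus $\bigl(\mincost^k(\xx)\bigr)_{k\in K}$ is determined by the sign of every $\xi^k_a(\pp^\star)$ and by the pairwise comparisons $\xi^k_a(\pp^\star)$ versus $\xi^{k'}_a(\pp^\star)$ — equivalently, by the face containing $\pp^\star$ in the arrangement $\mathcal H$ made of the hyperplanes $\{\pp:\xi^k_a(\pp)=0\}$ for $k\in K,\ a\in A^k$ and $\{\pp:\xi^k_a(\pp)=\xi^{k'}_a(\pp)\}$ for $a\in A^k\cap A^{k'}$, $k\neq k'$, inside the space of potential vectors (of dimension $d=\sum_k(|V^k|-1)$ after using $\pi^k_{s^k}=0$). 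I would then enumerate the faces $F$ of $\mathcal H$ together with a point $\pp_F$ in the relative interior of each, set $S^k_F:=\{a\in A^k:\ \xi^k_a(\pp_F)\geq 0\text{ and }\xi^k_a(\pp_F)\geq\xi^{k'}_a(\pp_F)\ \forall k'\text{ with }a\in A^{k'}\}$, and let $\mathcal S:=\{(S^k_F)_{k\in K}:\ F\text{ a face of }\mathcal H\}$. For any equilibrium multiflow $(\xx^k)_{k\in K}$, its potential vector $\pp^\star$ lies in the relative interior of exactly one face $F$, and then $S^k_F=\mincost^k(\xx)\supseteq\supp(\xx^k)$ by Proposition~\ref{prop:inclusions}, so $(S^k_F)_{k\in K}\in\mathcal S$ gives the required sandwich (with equality on the right). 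The arrangement $\mathcal H$ has $N=O(|K|^2|A|)$ hyperplanes in dimension $d\leq|K|(|V|-1)$; when $|K|$ and $|V|$ are fixed, $d$ is constant, $\mathcal H$ has $O(N^d)$ faces, and all of them, with an interior point of each, are enumerable in polynomial time by the standard facts on hyperplane arrangements recalled in Section~\ref{subsec:hyperplan}, while each $S^k_F$ is read off in $O(|K|^2|A|)$ operations. Hence $\mathcal S$ has polynomial size and is computed in polynomial time.

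The crux is the displayed identity $x_a=\max\bigl(0,\max_k\xi^k_a(\pp^\star)\bigr)$: it is what replaces the a priori high-dimensional flow vector by a piecewise-affine function of the potential vector, whose dimension is controlled by $|K|$ and $|V|$ — without it the arrangement would have to live in flow space, where the number of cells is not polynomially bounded. A secondary point to be careful about is that one must enumerate faces of all dimensions, not only the full-dimensional cells, because the potentials of an equilibrium may lie on some hyperplanes of $\mathcal H$; the formula for $S^k_F$ still applies there verbatim, since the weak inequalities defining it are constant on the relative interior of each face.
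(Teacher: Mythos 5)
Your proof is correct and follows essentially the same route as the paper's: your hyperplanes $\{\xi_a^k=0\}$ and $\{\xi_a^k=\xi_a^{k'}\}$ are exactly the paper's $h_a^{k}$ and $h_a^{k,k'}$, and your sets $S^k_F$ coincide with the paper's $\varphi(P)_k=\{a:\,P\cap P_a^k\neq\emptyset\}$. The only real difference is presentational: you package the paper's case analysis (on $x_a>0$ versus $x_a=0$, and on which class carries flow on $a$) into the single identity $x_a=\max\bigl(0,\max_k\xi_a^k(\pp^\star)\bigr)$, which yields the slightly stronger conclusion $S^k_F=\mincost^k(\xx)$ rather than just the required sandwich.
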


Both the size of $\mathcal S$ and the time complexity to compute it are actually a $O((K^2|A|)^{K(|V|-1)})$.

In the next proposition, $|K|$ and $|V|$ are not required to be fixed. As we will see in the proof, it amounts to solve a system of linear equalities and inequalities, which is polynomially solvable thanks to the interior point method.

\begin{proposition} \label{prop:poly_compute}
Let $(S^k)_{k\in K}  \in \AA$. In polynomial time, we can \begin{itemize}
\item decide whether there exists an equilibrium multiflow $(\xx^k)_{k\in K}$ with $$\supp(\xx^k) \subseteq S^k \subseteq \mincost^k(\xx)$$ for all $k$, 
\item compute such a multiflow if it exists.
\end{itemize}
\end{proposition}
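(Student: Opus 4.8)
The plan is to fix the tuple $(S^k)_{k\in K}$: once these sets are frozen, the complementarity condition $x_a^k\mu_a^k=0$ of~\eqref{pb:MNEP} can be replaced by the \emph{linear} conditions $x_a^k=0$ for every $a\in A^k\setminus S^k$ (to force $\supp(\xx^k)\subseteq S^k$) and $\mu_a^k=0$ for every $a\in S^k$ (to force the arcs of $S^k$ to be tight). Call $\mathcal{L}$ the resulting system, which keeps the flow conservation equations, the arc equations $\alpha_{uv}^kx_{uv}+\pi^k_u-\pi^k_v-\mu^k_{uv}=-\beta_{uv}^k$, the normalization $\pi_{s^k}^k=0$, and the sign conditions $x_a^k\ge0$, $\mu_a^k\ge0$, $\pi_v^k\in\R$. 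It is a system of $O(|K|\,|A|)$ linear equalities and inequalities with rational data of polynomial bit-size, so deciding its feasibility and exhibiting, when feasible, a rational solution of polynomial size is done in polynomial time by the interior point method. It therefore suffices to prove that $\mathcal{L}$ has a solution if and only if there is an equilibrium multiflow $(\xx^k)_{k\in K}$ with $\supp(\xx^k)\subseteq S^k\subseteq\mincost^k(\xx)$ for all $k$, and that a solution of $\mathcal{L}$ provides one. Beforehand we replace each $S^k$ by the set of its arcs lying on an $s^k$-path inside $S^k$: the support of any admissible multiflow, being a union of $s^k$-$t^k$ paths, is already contained in this restriction, which is the version of $S^k$ making the argument below tight.

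For the ``only if'' direction, assume such an equilibrium multiflow exists, and set $\pi_v^k:=\pi_v^k(\xx)$ and $\mu_{uv}^k:=c_{uv}^k(x_{uv})+\pi^k_u-\pi^k_v$, exactly as in the proof of Proposition~\ref{prop:equilibrium}. Flow conservation and $\pi_{s^k}^k=0$ hold by assumption; the shortest-path triangle inequality gives $\mu_a^k\ge0$; the inclusion $S^k\subseteq\mincost^k(\xx)$ gives $\mu_a^k=0$ on $S^k$; and $\supp(\xx^k)\subseteq S^k$ gives $x_a^k=0$ off $S^k$. Hence $\mathcal{L}$ is feasible.

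For the ``if'' direction, take a solution $(\xx^k,\mm^k,\pp^k)_{k\in K}$ of $\mathcal{L}$. Each $\xx^k$ is a nonnegative $s^k$-$t^k$ flow of value $\lambda(I^k)$ supported in $S^k$. The arc equations together with $\mm^k\ge\zero$ say that $\pp^k$ is a feasible potential for the costs $c_a^k(\xx)$, hence $\pi_v^k\le\pi_v^k(\xx)$ for all $v\in V^k$; and $\mu_a^k=0$ on $S^k$ means every arc $(u,v)\in S^k$ is tight: $\pi_v^k-\pi_u^k=c_a^k(x_a)$. Since $\alpha_a^k>0$ and $x_a\ge x_a^k>0$ on $\supp(\xx^k)$, each arc of the support has strictly positive cost, so summing the tight equalities around a hypothetical cycle of $\supp(\xx^k)$ would give $0>0$; thus $\supp(\xx^k)$ is acyclic, $\xx^k$ decomposes into $s^k$-$t^k$ paths, and this decomposition realizes a strategy profile. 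Every such path is made of tight arcs of $S^k$, so its cost telescopes to $\pi_{t^k}^k\le\pi_{t^k}^k(\xx)$; being the cost of a genuine $s^k$-$t^k$ path it equals $\pi_{t^k}^k(\xx)$, so every used path is minimum-cost and the profile is a Nash equilibrium with multiflow $(\xx^k)_{k\in K}$. Finally, for $a=(u,v)\in S^k$, the preliminary restriction provides an $s^k$-$u$ path inside $S^k$; its arcs are tight, so its cost telescopes to $\pi_u^k$, whence $\pi_u^k(\xx)\le\pi_u^k$ and therefore $\pi_u^k=\pi_u^k(\xx)$; combining this with the tightness of $a$ and with $\pi_v^k\le\pi_v^k(\xx)$ forces $\pi_v^k=\pi_v^k(\xx)$, so $a\in\mincost^k(\xx)$. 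Hence $S^k\subseteq\mincost^k(\xx)$, and the solution of $\mathcal{L}$ is the desired multiflow.

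The construction of $\mathcal{L}$ and its polynomial solvability are routine; the content is the equivalence, and within it the delicate step is the last part of the ``if'' direction: showing that the auxiliary potentials $\pp^k$ returned by the linear solver actually coincide with the true shortest-path distances $\pp^k(\xx)$ at the endpoints of the arcs of $S^k$. This is precisely where the acyclicity of the support, the strict positivity of the affine costs, and the reduction of $S^k$ to arcs reachable from $s^k$ within $S^k$ are all used, and it is the step I expect to require the most care — without it, a solution of $\mathcal{L}$ need not satisfy $S^k\subseteq\mincost^k(\xx)$.
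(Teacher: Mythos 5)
Your proof is correct and follows the same route as the paper: freeze $(S^k)_{k\in K}$, replace the complementarity condition by the linear constraints $x_a^k=0$ off $S^k$ and $\mu_a^k=0$ on $S^k$, and solve the resulting system of linear equalities and inequalities by an interior point method. The paper's own proof stops there and simply \emph{asserts} the equivalence between feasibility of that system and existence of the sandwiched equilibrium. What you add is a proof of the equivalence, and in supplying it you have correctly located a real gap in the bare assertion: for an arbitrary $(S^k)\in\AA$ the backward implication is false. Concretely, take one class $k$, vertices $s,u,t$, arcs $a_1=(s,t)$ with cost $x+1$, and $a_2=(s,u)$, $a_3=(u,t)$ each with cost $x+10$, demand $1$, and $S^k=\{a_1,a_3\}$. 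The linear system is feasible (it forces $x_{a_1}=1$, $x_{a_2}=x_{a_3}=0$, $\pi_t^k=2$, $\pi_u^k=-8$, $\mu_{a_2}^k=18$) and its flow is the unique equilibrium, yet $a_3\notin\mincost^k(\xx)$ since the true potential at $u$ is $10$, not $-8$; hence no equilibrium satisfies $S^k\subseteq\mincost^k(\xx)$ although the system is feasible. Your preprocessing (restricting $S^k$ to its arcs reachable from $s^k$ inside $S^k$) is exactly what anchors the tight arcs to the true shortest-path distances, and your telescoping argument giving $\pi_u^k=\pi_u^k(\xx)$ and then $\pi_v^k=\pi_v^k(\xx)$ is correct, as is the acyclicity argument using $\alpha_a^k>0$.

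One caveat you should make explicit: after replacing $S^k$ by its reachable part $\tilde S^k$, your procedure decides the existence of an equilibrium with $\supp(\xx^k)\subseteq \tilde S^k\subseteq\mincost^k(\xx)$, which is strictly weaker than the condition in the statement (in the example above it answers yes while the literal question for $S^k$ has answer no). This is harmless for the paper's purposes --- all that Theorem~\ref{thm:poly_main} needs is that (i) any multiflow the system returns is a genuine equilibrium, which follows already from Proposition~\ref{prop:equilibrium}, and (ii) any $(S^k)$ sandwiched for some equilibrium yields a feasible system, so the witness produced by Proposition~\ref{prop:poly_determine} is never missed --- but it means you are proving a slightly reformulated proposition rather than the literal one. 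Since the literal statement is not decided by the paper's unmodified system either, the honest repair is to restate the proposition as the two one-sided implications actually used; your write-up essentially does this, and you should say so rather than present the preprocessing as an innocuous normalization.
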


An equilibrium multiflow $(\xx^k)_{k\in K}$ is known to exist, see the Section ``Model'' of the Introduction. Thus, when the algorithm terminates, it has necessarily found an equilibrium.

To summarize, we have the following theorem. 

\begin{theorem}\label{thm:poly_main}
For a fixed number of classes and vertices, there exists an algorithm solving the Multiclass Network Equilibrium Problem with affine costs in polynomial time with respect to the number of arcs.
\end{theorem}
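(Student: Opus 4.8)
The plan is to derive Theorem~\ref{thm:poly_main} directly from the two-step structure already laid out, invoking Propositions~\ref{prop:poly_determine} and~\ref{prop:poly_compute} together with the characterization of equilibria given by Proposition~\ref{prop:inclusions}. First I would fix the number of classes $|K|$ and of vertices $|V|$, and note that the input size is then governed by $|A|$ together with the bit-lengths of the rationals $\alpha_a^k,\beta_a^k$ and of the class masses $\lambda(I^k)$. Step one is to run the procedure of Proposition~\ref{prop:poly_determine} to produce the set $\mathcal{S}\subseteq\AA$; by that proposition this takes time $O((K^2|A|)^{K(|V|-1)})$, which is polynomial in $|A|$ once $|K|,|V|$ are constants, and it yields $|\mathcal{S}|$ of the same polynomial order. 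The defining property of $\mathcal{S}$ is that every equilibrium multiflow $(\xx^k)_{k\in K}$ satisfies $\supp(\xx^k)\subseteq S^k\subseteq\mincost^k(\xx)$ for some $(S^k)_{k\in K}\in\mathcal{S}$.

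Step two is to iterate over the polynomially many elements of $\mathcal{S}$ and, for each $(S^k)_{k\in K}$, call the polynomial-time routine of Proposition~\ref{prop:poly_compute} to decide whether there is an equilibrium multiflow with $\supp(\xx^k)\subseteq S^k\subseteq\mincost^k(\xx)$ for all $k$, and to output one if so. A polynomial number of calls to a polynomial-time subroutine is still polynomial, so the total running time is polynomial in $|A|$. For correctness, I would argue as follows. An equilibrium multiflow exists by the classical result recalled in the ``Model'' section of the introduction; by the property of $\mathcal{S}$ from step one, such an equilibrium is feasible for at least one guess $(S^k)_{k\in K}\in\mathcal{S}$, so the loop in step two cannot exhaust $\mathcal{S}$ without success — it must return some multiflow. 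Conversely, any multiflow returned satisfies $\supp(\xx^k)\subseteq\mincost^k(\xx)$ for all $k$ (since $\supp(\xx^k)\subseteq S^k\subseteq\mincost^k(\xx)$), hence is an equilibrium multiflow by Proposition~\ref{prop:inclusions}. From an equilibrium multiflow one recovers an equilibrium strategy profile by an arbitrary path decomposition of each $\xx^k$, which Proposition~\ref{prop:inclusions} guarantees uses only minimum-cost paths.

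I do not expect a genuine obstacle at the level of this theorem: it is essentially the bookkeeping that glues the two propositions together, and the only points needing care are (i) checking that ``polynomially many calls to a polynomial algorithm'' really stays polynomial in the intended parameter — the number of arcs — which is immediate since $|K|$ and $|V|$ are treated as fixed constants absorbed into the exponent, and (ii) making explicit that ``solving the Multiclass Network Equilibrium Problem'' means producing an equilibrium multiflow (or equivalently, via path decomposition, a strategy profile), so that the output of step two is of the required form. The real content of the result lives in the proofs of Propositions~\ref{prop:poly_determine} and~\ref{prop:poly_compute} — the former resting on the hyperplane-arrangement argument sketched in Section~\ref{subsec:main}, the latter on reducing the fixed-support question to a linear feasibility system solvable by interior-point methods — and the present theorem is simply their conjunction.
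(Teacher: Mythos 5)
Your proposal is correct and matches the paper's own argument: the paper proves Theorem~\ref{thm:poly_main} exactly by combining Propositions~\ref{prop:poly_determine} and~\ref{prop:poly_compute} in the two-step loop you describe, with correctness resting on the a priori existence of an equilibrium and on the defining property of $\mathcal{S}$. Your added remarks on recovering a strategy profile via path decomposition and on Proposition~\ref{prop:inclusions} certifying the output are consistent with, and slightly more explicit than, the paper's presentation.
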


The complexity is $O\left((K^2|A|)^{K(|V|-1)}\right)$ times the complexity of solving a system of linear equalities and inequalities with $\sum_{k\in K}(|A^k|+|V^k|-1)$ variables.

\subsection{Preliminaries on hyperplane arrangements}\label{subsec:hyperplan}

A {\em hyperplane} $h$ in $\R^d$ is a $(d-1)$-dimensional subspace of $\R^d$. It partitions $\R^d$ into three regions: $h$ itself and the two open half-spaces having $h$ as boundary. We give an orientation for $h$ and note the two half-spaces $h^\oplus$ and $h^\ominus$, the former being on the positive side of $h$ and the latter one on the negative side. The closed half-spaces are denoted by $\overline{ h^\oplus} = h^\oplus \cup h$ and $\overline{ h^\ominus} = h^\ominus \cup h$. Given a finite set $H$ of hyperplanes, an {\em arrangement} is a partition of $\R^d$ into relatively open convex subsets, called {\em cells}. A $k$-cell is a cell of dimension $k$. A $0$-cell is called a point. The {\em hyperplane arrangement} $\mathcal A(H)$ associated to the set of hyperplanes $H$ is defined as follows. The $d$-cells are the connected components of $\R^d \setminus H$. For $0\leq k \leq d-1$, a $k$-{\em flat} is the intersection of exactly $d-k$ hyperplanes of $H$. Then, the $k$-cells of the arrangement are the connected components of $L \setminus \{ h \in H, L \nsubseteq h \}$ for every $k$-flat $L$.

Given an arrangement of $n$ hyperplanes, the number of $k$-cells is bounded by $$\sum_{i=0}^{k} \binom{d-i}{k-i} \binom{n}{d-i}.$$ The total number of cells is thus a $O(n^d)$. In a breakthrough paper, \cite{EdORSe86} proved that the set of cells (determined by the relative positions with respect to the hyperplans) can be computed in $O(n^d)$ as well, given the equations of the hyperplanes (and assuming that the coefficients involved in the equations are in $\mathbb Q$).

Further details on hyperplane arrangements can be found in \cite{Ed87} or \cite{Ma02} for example.

\subsection{Proof of Proposition~\ref{prop:poly_determine}}\label{subsec:determine}

For each class $k$, and each arc $a\in A^k$, we define the oriented half-space of $\prod_{j\in K}\R^{V^j\setminus\{s^j\}}$:
$$h_{a}^{k,\ominus} =  \left\{ \vec{\yy}=(y_v^k) \in\prod_{j\in K}\R^{V^j\setminus\{s^j\}}:\; y_v^{k}-y_u^k>  \beta_a^{k} \right\}.$$
For each class $k\neq k'$ and arc $a=(u,v)\in A^k\cap A^{k'}$, we define moreover the following oriented half-space, still of $\prod_{j\in K}\R^{V^j\setminus\{s^j\}}$:
$$h_a^{k,k',\ominus} = \left\{ \vec{\yy}=(y_v^j) \in\prod_{j\in K}\R^{V^j\setminus\{s^j\}}:\; \alpha_a^{k'} \left(y_v^{k}-y_u^k - \beta_a^{k}\right) > \alpha_a^k \left(y_v^{k'}-y_u^{k'} - \beta_a^{k'}\right)  \right\}.$$
We define the convex polyhedron $$P_a^k =\overline{h_{a}^{k,\ominus}}\cap \bigcap_{k'\neq k:\;A^k\cap A^{k'}\neq\emptyset} \overline{h_a^{k,k',\ominus}}.$$

The $P_a^k$'s have a useful property that links the cost at an equilibrium to the support. Let $\vec{\pp}(\xx)\in\prod_{k\in K}\R^{V^k\setminus\{s^k\}}$ be the vector $(\pp^k(\xx))_{k\in K}$.

\begin{lemma}\label{lem:1}
 Let $(\xx^k)_{k\in K}$ be an equilibrium multiflow. For any class $k$ and arc $a$,  if  $a \in \supp( \xx^k)$, then $\vec{\pp}(\xx)\in P_a^{k}$.
\end{lemma}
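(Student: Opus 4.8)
The plan is to unfold the definition of $P_a^k$ and verify, one by one, the two families of closed half-space constraints that cut it out, using the characterization of equilibria from Proposition~\ref{prop:inclusions} together with the elementary relaxed triangle inequality for shortest-path potentials. The two ingredients I would isolate first are as follows. Fix an arc $a=(u,v)$. For \emph{every} class $j$ with $a\in A^j$, both $u$ and $v$ lie in $V^j$, so $\pi^j_u(\xx)$ and $\pi^j_v(\xx)$ are finite, and optimality of the potentials gives
$$\pi^j_v(\xx)-\pi^j_u(\xx)-\beta_a^j\;\le\;\alpha_a^j x_a,$$
with equality whenever $a\in\mincost^j(\xx)$ (the path reaching $u$ optimally and then using $a$ is a candidate $s^j$-$v$ path of cost $\pi^j_u(\xx)+\alpha_a^j x_a+\beta_a^j$).

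Now suppose $a\in\supp(\xx^k)$. By Proposition~\ref{prop:inclusions}, $a\in\mincost^k(\xx)$, hence $\pi^k_v(\xx)-\pi^k_u(\xx)-\beta_a^k=\alpha_a^k x_a$. Since $\alpha_a^k>0$ and $x_a\ge 0$, the right-hand side is nonnegative, i.e. $\pi^k_v(\xx)-\pi^k_u(\xx)\ge\beta_a^k$, which is exactly the assertion $\vec{\pp}(\xx)\in\overline{h_a^{k,\ominus}}$.

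Next, for each class $k'\ne k$ with $a\in A^k\cap A^{k'}$, I would combine the two ingredients applied to $k$ (with equality) and to $j=k'$ (with inequality): multiplying the first by $\alpha_a^{k'}>0$ and the second by $\alpha_a^{k}>0$ yields
$$\alpha_a^{k'}\bigl(\pi^k_v(\xx)-\pi^k_u(\xx)-\beta_a^k\bigr)=\alpha_a^{k'}\alpha_a^{k}x_a\;\ge\;\alpha_a^{k}\bigl(\pi^{k'}_v(\xx)-\pi^{k'}_u(\xx)-\beta_a^{k'}\bigr),$$
which is precisely $\vec{\pp}(\xx)\in\overline{h_a^{k,k',\ominus}}$. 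Intersecting over all such $k'$ and with $\overline{h_a^{k,\ominus}}$ gives $\vec{\pp}(\xx)\in P_a^{k}$, as claimed.

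I do not anticipate a real obstacle: once the correct pair of inequalities is identified — the equality for the class that actually routes flow on $a$, and the relaxed triangle inequality for every other class sharing $a$ — the lemma is bookkeeping. The only points requiring a little care are checking that $\pi^{k'}_u(\xx),\pi^{k'}_v(\xx)$ are finite for the classes $k'$ entering the intersection (which follows from $a\in A^{k'}\Rightarrow u,v\in V^{k'}$), keeping track of the orientations so that the closures $\overline{h_a^{k,\ominus}}$, $\overline{h_a^{k,k',\ominus}}$ correspond to the ``$\ge$'' versions of the defining strict inequalities, and using $\alpha_a^k,\alpha_a^{k'}>0$ so that multiplying through preserves the inequality directions.
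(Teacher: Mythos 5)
Your proof is correct and follows essentially the same route as the paper's: use Proposition~\ref{prop:inclusions} to get the equality $\pi_v^k(\xx)-\pi_u^k(\xx)-\beta_a^k=\alpha_a^k x_a\ge 0$ for the class carrying flow on $a$, and the shortest-path inequality $\pi_v^{k'}(\xx)-\pi_u^{k'}(\xx)\le \alpha_a^{k'}x_a+\beta_a^{k'}$ for every other class sharing $a$, then clear denominators using $\alpha_a^k,\alpha_a^{k'}>0$. The only difference is cosmetic (the paper divides by $\alpha_a^k$ where you multiply through), and your extra care about finiteness of the potentials and the orientation of the closed half-spaces is consistent with the definitions.
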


\begin{proof}
Let $a=(u,v) \in \supp(\xx^k)$. According to Proposition~\ref{prop:inclusions}, we have $$x_a = \frac{\pi_v^{k}(\xx)-\pi_u^{k}(\xx)-\beta_a^{k}}{\alpha_a^{k}}.$$ In particular, since $ x_a \geq 0$, we have $\pi_v^{k}(\xx)-\pi_u^{k}(\xx) \geq \beta_a^k$ and thus $ \vec{\pp}(\xx)\in \overline{h_{a}^{k,\ominus}} $. 

For any other class $k'$ such that $a\in A^{k'}$, we have 
$$\alpha_a^{k'} \left(\frac{\pi_v^{k}(\xx)-\pi_u^{k}(\xx)-\beta_a^k}{\alpha_a^k}\right) + \beta_a^{k'} \geq \pi_v^{k'}(\xx)-\pi_u^{k'}(\xx)$$ according to the definition of $\pp^{k'}(\xx)$. It implies that $\vec{\pp}(\xx) \in \overline{h_a^{k,k',\ominus}}.$ 

Therefore, $\vec{\pp}(\xx) \in P_a^k.$
\end{proof}

\bigskip

In order to prove the proposition, we consider the set of hyperplanes $$H = \left\{ h_a^{k,k'}:\; k\neq k' \in K, a\in A^k\cap A^{k'}\right\} \cup \left\{ h_{a}^{k}:\; k \in K, a\in  A^k\right\}.$$ We consider then the associated arrangement  $\mathcal{A}(H)$.

\begin{proof}[Proof of Proposition~\ref{prop:poly_determine}]
We start by building $\mathcal{A}(H)$. The number of cells and the time complexity to build them  are a $O\left( (K^2|A|)^{K(|V|-1)}\right)$ (see Section~\ref{subsec:hyperplan}).

Define the map $\varphi : \{\mbox{cells of }\mathcal{A}(H)\} \to \mathcal \AA$ in the following way: for every cell $P$ and class $k \in K$,
$$ \varphi(P)_k = \{ a \in A^k:\; P \cap P_a^k \neq \emptyset \}.$$ This map can easily be built in polynomial time. 

Let then $\mathcal S = \varphi(\{\mbox{cells of }\mathcal{A}(H)\})$. The size of $\mathcal S$ is at most $O\left( (K^2|A|)^{K(|V|-1)}\right)$.\\

It remains to show that for any equilibrium multiflow $(\xx^k)_{k\in K}$, there exists $(S^k)_{k\in K} \in \mathcal{S}$ such that $\supp(\xx^k) \subseteq S^k \subseteq \mincost^k(\xx)$ for all $k$.

Let $(\xx^k)_{k\in K}$ be an equilibrium multiflow. Since the cells of $\mathcal A (H)$ partition $\prod_{k\in K}\R^{V^k\setminus\{s^k\}}$, there is a cell $P_0$ such that $\vec{\pp}(\xx)\in P_0$. 
Let $k \in K$ and $a \in \supp( \xx^k)$. Lemma~\ref{lem:1} ensures that $\vec{\pp}(\xx)\in P_a^{k}$, and in particular that $P_0 \cap P_a^k \neq \emptyset$, i.e. $a \in \varphi(P_0)_k$. We have thus $\supp(\xx^k) \subseteq \varphi(P_0)_k$ for every $k\in K$. Defining $S^k = \varphi(P_0)_k$, we have $\supp(\xx^k)\subseteq S^k$ for all $k$, as required. \\

We prove now  that $S^k \subseteq \mincost^k(\xx)$ for all $k$. Consider a class $k$ and an arc $a\in S^k$. We have already proved that 
$\vec{\pp}(\xx)\in P_a^k$.

Suppose first that $x_a>0$. If $a \in \supp(\xx^k)$, Proposition~\ref{prop:inclusions} implies that $a \in \mincost^k(\xx)$. Otherwise, there is at least a class $k_0\neq k$ such that $a\in\supp(\xx^{k_0})$. Lemma \ref{lem:1} gives that $\vec{\pp}(\xx)\in P_a^{k_0}$. We have thus $\vec{\pp}(\xx)\in P_a^{k_0}\cap P_a^k$, which implies $\vec{\pp}(\xx)\in h_a^{k,k_0}$. This translates into
 $$   \alpha_a^k \left( \pi_v^{k_0}(\xx)-\pi_u^{k_0}(\xx) - \beta_a^{k_0}\right)  = \alpha_a^{k_0} (\pi_v^{k}(\xx)-\pi_u^{k}(\xx)-\beta_a^{k}),$$ i.e. 
$\alpha_a^kx_a+\beta_a^k=\pi_v^{k}(\xx)-\pi_u^{k}(\xx)$. Hence, $a \in \mincost^k(\xx)$.

Suppose then $x_a=0$. Since $\vec{\pp}(\xx)\in P_a^k$, we have in particular $\vec{\pp}(\xx)\in \overline{h_{a}^{k,\ominus}}$. It implies that $\pi_v^{k}(\xx)-\pi_u^{k}(\xx)\geq\beta_a^{k}$. The reverse inequality is a consequence of the definition of $\pp^{k}(\xx)$. We have thus $\pi_v^{k}(\xx)-\pi_u^{k}(\xx)=\beta_a^{k}$, which implies again $a \in \mincost^k(\xx)$.
\end{proof}

\subsection{Proof of Proposition~\ref{prop:poly_compute}}\label{subsec:compute}

\begin{proof}
There exists an equilibrium multiflow $(\xx^k)_{k\in K}$ with $\supp(\xx^k) \subseteq S^k \subseteq \mincost^k(\xx)$ for all $k$ if and only if there is a solution of the program~\eqref{pb:MNEP} with $\mu_a^k=0$ for all $k\in K$ and $a\in S^k$, and $x_a^k=0$ for all $k\in K$ and $a\notin S^k$. It gives rise to a system of linear equalities and inequalities with $\sum_{k\in K}(|A^k|+|V^k|-1)$ variables, which can be solved in polynomial time
by the interior point method (see \cite{Wr97} for example). 
\end{proof}

\begin{remark}
We can reduce the size of $\mathcal S$. We know without any computation that there are no solutions as soon as there is a class $k$ with $S^k = \emptyset$. It means that we can consider only the cells $P$ such that for every class $k$ there exists an arc $a$ with $P \cap P_a^k \neq \emptyset$. We can remove from $\mathcal A (H)$ the cells belonging to $$\bigcup_{k\in K} \bigcap_{a=(u,v) \in A^k} \left( \bigcup_{k'\neq k:\;A^k\cap A^{k'}=\emptyset}h_a^{k,k',\oplus} \cup h_{uv}^{k,\oplus}  \right).$$
However, this reduction is in general negligible with respect to the total size of $\mathcal S$.
\end{remark}

\section{A network Lemke-like algorithm}\label{sec:lemke}

\subsection{An optimization formulation}\label{subsec:opt}

Similarly as for the classical Lemke algorithm, we rewrite the problem as an optimization problem. It is the starting point of the algorithm. This problem is called the {\em Augmented Multiclass Network Equilibrium Problem}. Let $\ee=(e_a^k)$  be any vector defined for all $k\in K$ and $a\in A^k$. Consider the following optimization program.

\begin{equation}\tag{$AMNEP(\ee)$}\label{pb:AMNEP}
\begin{array}{rlr}\min & \omega \\
\mbox{s.t.} &  \displaystyle{\sum_{a\in \delta^+(v)} x_a^k = \sum_{a\in \delta^-(v)} x_a^k + b_v^k} &  k\in K,v \in V^k\\ \\
& \displaystyle{\alpha_{uv}^k\sum_{k'\in K}x^{k'}_{uv} + \pi^k_u - \pi^k_v - \mu^k_{uv} +e_{uv}^k\omega= - \beta_{uv}^k} & k\in K,(u,v) \in A^k  \\ \\
 & x_a^k \mu_a^k = 0 & k\in K, a\in A^k  \\ \\
  & \pi_{s^k}^k=0 & k\in K \\ \\
 & x_a^k \geq 0, \mu_a^k \geq 0, \omega\geq 0, \pi_v^k \in \R &  k\in K, a \in A^k, v\in V^k.
\end{array} \end{equation}

A key remark is 
\begin{quote}
{\em Solving \eqref{pb:MNEP} amounts to find an optimal solution of \eqref{pb:AMNEP}\\ with $\omega=0$.}
\end{quote}

Indeed, a solution with $\omega=0$ can easily be completed to provide a solution of \eqref{pb:MNEP}, and conversely, a solution of \eqref{pb:MNEP} provides a solution with $\omega=0$ of  \eqref{pb:AMNEP}. Some choices of $\ee$ allow to find easily feasible solutions to this program. In Section~\ref{subsec:tools}, $\ee$ will be chosen in such a way. 

We write the program~\eqref{pb:AMNEP} under the form
$$
\begin{array}{rl}\min & \omega \\
\mbox{s.t.} & 
 \overline{M}^{\ee}\left(\begin{array}{c}\boldsymbol{x} \\ \boldsymbol{\mu} \\ \omega\end{array}\right)+
\left(\begin{array}{c}\zero \\ M^T\end{array}\right)\boldsymbol{\pi}=\left(\begin{array}{c}\boldsymbol{b}\\-\boldsymbol{\beta}\end{array}\right) \\
& \boldsymbol{x}\cdot\boldsymbol{\mu}=0 \\
& \boldsymbol{x}\geq\zero,\,\boldsymbol{\mu}\geq\zero,\,\omega\geq 0,\,\boldsymbol{\pi}\in\prod_{k\in K}\R^{V^k\setminus\{s^k\}},
\end{array}
$$ where $\overline{M}^{\ee}$ and $C$ are defined as follows. (The matrix $\overline{M}^{\ee}$ is denoted with a superscript $\ee$ in order to emphasize its dependency on $\ee$).

We define $M=\diag((M^k)_{k\in K})$ where $M^k$ is the incidence matrix of the directed graph $(V^k,A^k)$ from which the $s^k$-row has been removed: 
$$M^k_{v,a}=\left\{ \begin{array}{ll} 1 & \mbox{ if $a\in\delta^+(v)$,} \\ -1 & \mbox{ if $a\in\delta^-(v)$,} \\ 0&\mbox{ otherwise}. \end{array}\right.$$

We also define $C^k=\diag((\alpha_a^k)_{a\in A^k})$ for $k\in K$, and then $C$ the real matrix $C=({\underbrace{(C^k,\cdots,C^k)}_{|K|\mbox{\tiny{ times}}}}{}_{k\in K})$. Then let $$\overline{M}^{\ee}= \left(\begin{array}{ccc} M & \zero & \zero \\ C & -I & \ee\end{array}\right).$$ 

For $k\in K$, the matrix $M^k$ has $|V^k|-1$ rows and $|A^k|$ columns, while $C^k$ is a square matrix with $|A^k|$ rows and columns. Then the whole matrix $\overline{M}^{\ee}$ has $\sum_{k\in K}(|A^k|+|V^k|-1)$ rows and $2\left(\sum_{k\in K}|A^k|\right)+1$ columns.

\subsection{Bases, pivots, and rays}\label{subsec:tools}

\subsubsection{Bases}\label{subsec:bases} We define $\X$ and $\M$ to be two disjoint copies of $\{(a,k):\,k\in K,\,a\in A^k\}$. We denote by $\phi^x(a,k)$ (resp. $\phi^{\mu}(a,k)$) the element of $\X$ (resp. $\M$) corresponding to $(a,k)$. The set $\X$ models the set of all possible indices for the `$x$' variables and $\M$ the set of all possible indices for the `$\mu$' variables for the program~\eqref{pb:AMNEP}. We consider moreover a dummy element $o$ as the index for the `$\omega$' variable.

We define a {\em basis} for the program~\eqref{pb:AMNEP} to be a subset $B$ of the set $\X\cup\M\cup\{o\}$ such that the square matrix of size $\sum_{k\in K}\left(|A^k|+|V^k|-1\right)$ defined by
$$\left(\begin{array}{c|c}\overline{M}^{\ee}_B & \begin{array}{c} \zero \\ M^T \end{array}\end{array}\right)$$ is nonsingular. Note that this definition is not standard. In general, a basis is defined in this way but without the submatrix $\left(\begin{array}{c} \zero \\ M^T \end{array}\right)$ corresponding to the `$\pi$' columns. We use this definition in order to be able to deal directly with the unsigned variables `$\pi$'. We will see that this approach is natural (and could be used for linear programming as well). However, we are not aware of a previous use of such an approach.

As a consequence of this definition, since $M^T$ has $\sum_{k\in K}(|V^k|-1)$ columns, a basis is always of cardinality $\sum_{k\in K}|A^k|$.

\begin{remark}\label{rem:size}
 In particular, since the matrix is nonsingular and since $M^T$ has $\sum_{k\in K}|A^k|$ rows, the first $\sum_{k\in K}(|V^k|-1)$ rows of $\overline{M}^{\ee}_B$ have each a nonzero entry. This property is used below, especially in the proof of Lemma~\ref{lem:nosecondary}.
\end{remark}

The following additional notation is useful: given a subset $Z\subseteq\X\cup\M\cup\{o\}$, we denote by $Z^x$ the set $\left(\phi^{x}\right)^{-1}(Z\cap\X)$ and by $Z^{\mu}$ the set $\left(\phi^{\mu}\right)^{-1}(Z\cap\M)$.
In other words, $(a,k)$ is in $Z^x$ if and only if $\phi^x(a,k)$ is in $Z$, and similarly for $Z^{\mu}$.

\subsubsection{Basic solutions and non-degeneracy}\label{subsec:basic}
Let $B$ be a basis. If it contains $o$, the unique solution 
$(\bar{\boldsymbol{x}},\bar{\boldsymbol{\mu}},\bar{\omega},\bar{\boldsymbol{\pi}})$
of 
\begin{equation}\label{eq:basicsol}
\left\{\begin{array}{l}
\left(\begin{array}{c|c}\overline{M}^{\ee}_B & \begin{array}{c} \zero \\ M^T \end{array}\end{array}\right)\left(\begin{array}{c}\boldsymbol{x}_{B^x} \\ \boldsymbol{\mu}_{B^{\mu}} \\ \omega \\ \boldsymbol{\pi}\end{array}\right)=\left(\begin{array}{c}\boldsymbol{b} \\ -\boldsymbol{\beta} \end{array}\right) \\
x_a^k = 0\quad\mbox{ for all $(a,k)\notin B^x$} \\
\mu_a^k = 0\quad\mbox{ for all $(a,k)\notin B^{\mu}$}.
\end{array}\right.
\end{equation} is called the {\em basic solution} associated to $B$. If $B$ does not contain $o$, we define similarly its associated {\em basic solution}. It is the unique solution $(\bar{\boldsymbol{x}},\bar{\boldsymbol{\mu}},\bar{\omega},\bar{\boldsymbol{\pi}})$ of
\begin{equation}\label{eq:basicsol_opt}
\left\{\begin{array}{l}
\left(\begin{array}{c|c}\overline{M}^{\ee}_B & \begin{array}{c} \zero \\ M^T \end{array}\end{array}\right)\left(\begin{array}{c}\boldsymbol{x}_{B^x} \\ \boldsymbol{\mu}_{B^{\mu}} \\ \boldsymbol{\pi}\end{array}\right)=\left(\begin{array}{c}\boldsymbol{b} \\ -\boldsymbol{\beta} \end{array}\right) \\
x_a^k = 0\quad\mbox{ for all $(a,k)\notin B^x$}\\
\mu_a^k = 0\quad\mbox{ for all $(a,k)\notin B^{\mu}$} \\
\omega = 0.
\end{array}\right.
\end{equation} 
A basis is said to be {\em feasible} if the associated basic solution is such that $\bar{\boldsymbol{x}},\bar{\boldsymbol{\mu}},\bar{\omega}\geq 0$.  \\

The program~\eqref{pb:AMNEP} is said to {\em satisfy the non-degeneracy assumption} if, for any feasible basis $B$, the associated basic solution $(\bar{\boldsymbol{x}},\bar{\boldsymbol{\mu}},\bar{\omega},\bar{\boldsymbol{\pi}})$ is such that 
$$\left((a,k)\in B^x\Rightarrow\bar{x}_a^k>0\right)\mbox{ and }\left((a,k)\in B^{\mu}\Rightarrow\bar{\mu}_a^k>0\right).$$ Note that if we had defined the vector $\boldsymbol{b}$ to be $0$ on all vertices $v\notin\{s^k,t^k\}$ -- as it is required by the original formulation of the Multiclass Network Equilibrium Problem -- the program would not in general satisfy the non-degeneracy assumption. Our network Lemke-like algorithm actually solves the program~\eqref{pb:AMNEP} under the non-degeneracy assumption, but, as it will be explained in Section~\ref{subsec:lemke}, it can be used to solve the degenerate case as well -- and thus the original formulation when the costs are affine -- via a perturbation argument.

An example of a basis for which the assumption fails to be satisfied is the basis $B^{ini}$ defined in Section~\ref{subsec:init}.  Remark~\ref{rem:degeneracy} in that section details the example.

\subsubsection{Pivots and polytope}
The following lemmas are key results that eventually lead to the Lemke-like algorithm. They are classical for the usual definition of bases. Since we have extended the definition, we have to prove that they still hold.

\begin{lemma}\label{lem:pivotout}
Let $B$ be a feasible basis for the program~\eqref{pb:AMNEP} and assume non-degeneracy. Let $i$ be an index in $\X\cup\M\cup\{o\}\setminus B$. Then there is at most one feasible basis $B'\neq B$ in the set $B\cup\{i\}$.
\end{lemma}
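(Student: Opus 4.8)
The plan is to mimic the classical linear-programming argument that two distinct adjacent bases cannot both be feasible, adapting it to the extended notion of basis that keeps the $\boldsymbol{\pi}$-columns permanently in the matrix. Fix a feasible basis $B$ with basic solution $(\bar{\boldsymbol{x}},\bar{\boldsymbol{\mu}},\bar{\omega},\bar{\boldsymbol{\pi}})$ and an index $i\notin B$. A basis contained in $B\cup\{i\}$ is obtained by removing one element $j$ from $B$ and keeping $i$; so the task is to show that among all choices of $j$ for which $(B\setminus\{j\})\cup\{i\}$ is again a basis, at most one yields a feasible basic solution. The natural way to see this is to look at the edge of the feasible region obtained by relaxing the constraint forcing the $i$-variable to be zero.

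First I would introduce the column $\boldsymbol{d}$ of $\overline{M}^{\ee}$ indexed by $i$ (one of the `$x$', `$\mu$', or `$o$' columns), and write its coordinates in the current basis: let $\boldsymbol{w}$ be the unique solution of $\left(\begin{array}{c|c}\overline{M}^{\ee}_B & \begin{array}{c}\zero \\ M^T\end{array}\end{array}\right)\boldsymbol{w}=\boldsymbol{d}$. For $\theta\geq 0$, setting the $i$-variable equal to $\theta$ and adjusting the basic variables to $(\bar{\boldsymbol{x}}_{B^x},\bar{\boldsymbol{\mu}}_{B^\mu},\bar{\omega},\bar{\boldsymbol{\pi}})-\theta\boldsymbol{w}$ gives a one-parameter family of solutions of the linear system $\overline{M}^{\ee}\binom{\boldsymbol{x}}{\boldsymbol{\mu}}{}\!+\!\ldots=\binom{\boldsymbol{b}}{-\boldsymbol{\beta}}$. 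The key point is that the non-degeneracy assumption makes $\bar{x}_a^k>0$ for $(a,k)\in B^x$, $\bar\mu_a^k>0$ for $(a,k)\in B^\mu$, and $\bar\omega>0$ if $o\in B$, so for small $\theta>0$ all the sign constraints $\boldsymbol{x}\geq\zero,\boldsymbol{\mu}\geq\zero,\omega\geq 0$ still hold (the $\boldsymbol{\pi}$ are unconstrained). Hence this family stays feasible on a maximal interval $[0,\theta^\ast]$, where $\theta^\ast$ is the smallest ratio $\bar{z}_j/w_j$ over the basic indices $j$ whose entry in $\boldsymbol{w}$ is positive (with $\theta^\ast=+\infty$ if there is none, in which case there is no second feasible basis at all).

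The second step is the ratio-test / uniqueness argument: by the non-degeneracy assumption the minimum ratio $\theta^\ast$ is attained at a unique index $j_0$, because if two distinct basic indices achieved the same minimum then the solution at $\theta=\theta^\ast$ would have two basic variables equal to zero, and this solution is exactly the basic solution of $(B\setminus\{j_0\})\cup\{i\}$ for either choice of $j_0$ — contradicting non-degeneracy applied to that feasible basis (one needs to check $(B\setminus\{j_0\})\cup\{i\}$ is indeed a basis, which follows from $w_{j_0}\neq 0$ by the standard rank-one update argument on the determinant). For any other $j\neq j_0$ with $(B\setminus\{j\})\cup\{i\}$ a basis, the corresponding basic solution is the point of the line at parameter $\theta=\bar z_j/w_j$; if $w_j\le 0$ this is not in $[0,\theta^\ast]$ (it is negative or at $0$, forcing degeneracy), and if $w_j>0$ then $\bar z_j/w_j\geq\theta^\ast$ with equality excluded by the previous paragraph, so again the point lies strictly beyond $\theta^\ast$ and has a negative coordinate. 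Therefore $B'=(B\setminus\{j_0\})\cup\{i\}$ is the only candidate, proving the lemma.

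The main obstacle I anticipate is handling the $\boldsymbol{\pi}$-block cleanly: because the $\boldsymbol{\pi}$-columns are always present and the $\boldsymbol{\pi}$ variables are free, one must be careful that (i) the rank-one update preserving nonsingularity still works when $i$ is an `$x$', `$\mu$', or `$o$' index and $j$ ranges only over $\X\cup\M\cup\{o\}$ (never over the $\boldsymbol{\pi}$'s), and (ii) the sign/ratio test only ever involves the genuinely constrained coordinates $\bar{\boldsymbol{x}},\bar{\boldsymbol{\mu}},\bar\omega$, never the free coordinates $\bar{\boldsymbol{\pi}}$ — so that a component $w_j$ corresponding to a $\boldsymbol{\pi}$-variable, even if positive, imposes no restriction on $\theta$. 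Both are routine once set up correctly, and the structural facts recorded in Remark~\ref{rem:size} (that the matrix is nonsingular of the right size) guarantee the update behaves as in the classical case.
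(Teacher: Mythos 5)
Your argument is correct and takes essentially the same route as the paper: both identify the solution set of the system associated with $B\cup\{i\}$ as a one-dimensional line through the basic solution of $B$ and conclude, using non-degeneracy, that it can meet the nonnegativity constraints' boundary in at most one further point. The paper states this purely geometrically (a line intersects the boundary of the convex set $Q$ at most twice), while you carry out the explicit minimum-ratio computation; these are the same argument in different clothing.
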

\begin{proof}%[Proof of Lemma~\ref{lem:pivotout}]
Let $(\bar{\boldsymbol{x}},\bar{\boldsymbol{\mu}},\bar{\omega},\bar{\boldsymbol{\pi}})$ be the basic solution associated to $B$ and let $Y=B\cup\{i\}$.
The set of solutions
 $$\left\{\begin{array}{l}
\left(\begin{array}{c|c}\overline{M}^{\ee}_Y & \begin{array}{c} \zero \\ M^T \end{array}\end{array}\right)\left(\begin{array}{c}\boldsymbol{x}_{Y^x} \\ \boldsymbol{\mu}_{Y^{\mu}} \\ \omega \\ \boldsymbol{\pi}\end{array}\right)=\left(\begin{array}{c}\boldsymbol{b} \\ -\boldsymbol{\beta} \end{array}\right) \\
x_a^k = 0\quad\mbox{ for all $(a,k)\notin Y^x$} \\
\mu_a^k = 0\quad\mbox{ for all $(a,k)\notin Y^{\mu}$}\end{array}\right.$$ is a one-dimensional line in $\R\times\prod_{k\in K}\left((\R^2)^{A^k}\times\R^{V^k\setminus\{s^k\}}\right)$ (the space of all variables) and passing through $(\bar{\boldsymbol{x}},\bar{\boldsymbol{\mu}},\bar{\omega},\bar{\boldsymbol{\pi}})$. The bases in $Y$ correspond to intersections of this line with the boundary of $$Q=\{(\boldsymbol{x},\boldsymbol{\mu},\omega,\boldsymbol{\pi}):\,x_a^k\geq 0, \mu_a^k\geq 0, \omega\geq 0,\mbox{ for all $k\in K$ and $a\in A^k$}\}.$$ This latter set being convex (it is a polyhedron), the line intersects at most twice its boundary under the non-degeneracy assumption.
\end{proof}

The operation consisting in computing $B'$ given $B$ and the {\em entering index} $i$ is called the {\em pivot operation}. If we are able to determine an index in $\X\cup\M\cup\{o\}\setminus B$ for any basis $B$, Lemma~\ref{lem:pivotout} leads to a ``pivoting'' algorithm. At each step, we have a current basis $B^{curr}$, we determine the entering index $i$, and we compute the new basis in $B^{curr}\cup\{i\}$, if it exists, which becomes the new current basis $B^{curr}$; and so on. The next lemma allows us to characterize situations where there is no new basis, i.e. situations for which the algorithm gets stuck.

The feasible solutions of \eqref{pb:AMNEP} belong to the polytope 
\begin{align*}
 \mathcal{P}(\ee)=\left\{(\boldsymbol{x},\boldsymbol{\mu},\omega,\boldsymbol{\pi}):\,
 \overline{M}^{\ee}\left(\begin{array}{c}\boldsymbol{x} \\ \boldsymbol{\mu} \\ \omega\end{array}\right)+\left(\begin{array}{c}\zero \\ M^T\end{array}\right)\boldsymbol{\pi}=\left(\begin{array}{c}\boldsymbol{b}\\-\boldsymbol{\beta}\end{array}\right),\, \right. \\
  \boldsymbol{x}\geq\zero,\,\boldsymbol{\mu}\geq\zero,\,\boldsymbol{\pi}\geq\zero,\,\omega\in\R_+ \Bigg\}. 
\end{align*}

\begin{lemma}\label{lem:infiniteray}
Let $B$ be a feasible basis for the program~\eqref{pb:AMNEP} and assume non-degeneracy. Let $i$ be an index in $\X\cup\M\cup\{o\}\setminus B$. If there is no feasible basis $B'\neq B$ in the set $B\cup\{i\}$, then the polytope $\mathcal{P}(\ee)$ contains an infinite ray originating at the basic solution associated to $B$.% and with direction $$-\left(\begin{array}{c|c}\overline{M}^{\ee}_B & \begin{array}{c} 0 \\ M^T \end{array}\end{array}\right)^{-1}\overline{M}^{\ee}_i.$$
\end{lemma}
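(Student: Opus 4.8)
The plan is to use the same geometric picture as in the proof of Lemma~\ref{lem:pivotout}. Fixing $Y=B\cup\{i\}$, the system obtained by imposing $x_a^k=0$ for $(a,k)\notin Y^x$ and $\mu_a^k=0$ for $(a,k)\notin Y^{\mu}$, together with the linear equation $\overline{M}^{\ee}_Y(\cdot)+\bigl(\begin{smallmatrix}\zero\\M^T\end{smallmatrix}\bigr)\pp=\bigl(\begin{smallmatrix}\bb\\-\boldsymbol{\beta}\end{smallmatrix}\bigr)$, defines a one-dimensional affine line $\ell$ in the full variable space, passing through the basic solution $z_B=(\bar{\xx},\bar{\mm},\bar\omega,\bar{\pp})$ associated to $B$. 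This is exactly the line exhibited in the proof of Lemma~\ref{lem:pivotout}; I will recall why it is genuinely one-dimensional (the matrix $\bigl(\overline{M}^{\ee}_B \,|\, \begin{smallmatrix}\zero\\M^T\end{smallmatrix}\bigr)$ is nonsingular, so adding the single column indexed by $i$ gives a kernel of dimension exactly one, and the fixed-coordinate constraints cut the solution set of the linear system down to a translate of that kernel).

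Next I would parametrize $\ell$ as $z(\theta)=z_B+\theta\,\dd$ for $\theta\in\R$, where $\dd\neq\zero$ spans the kernel direction. The point $z_B$ is feasible for $\mathcal{P}(\ee)$ (indeed $z_B$ lies in $\mathcal{P}(\ee)$ since $B$ is a feasible basis, noting that under non-degeneracy the components of $z_B$ indexed outside $B$ vanish and those indexed inside are $\geq 0$; the $\pp$-coordinates are unconstrained in sign in $\mathcal{P}(\ee)$'s definition — wait, here one must be careful, see below). Moving along $\ell$ in the direction $\dd$ (resp.\ $-\dd$), I track the first inequality among $x_a^k\geq 0$, $\mu_a^k\geq 0$, $\omega\geq 0$ that becomes tight. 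By non-degeneracy, exactly one such inequality becomes tight at the first positive $\theta$ (if any does), and hitting it produces a new feasible basis $B'\neq B$ contained in $Y$: indeed the index that drops out of $Y$ is the one whose variable just hit $0$, and the resulting square submatrix is nonsingular — this is the standard ``ratio test gives a basis'' argument, which I would carry out using the non-degeneracy hypothesis to guarantee the dropped column is not $i$ and that $B'$ is again a basis in the extended sense of Section~\ref{subsec:bases}. The hypothesis of the lemma says there is \emph{no} such $B'$; hence neither moving in direction $\dd$ nor in direction $-\dd$ ever makes one of these inequalities tight. Therefore one of the two half-lines $\{z_B+\theta\dd:\theta\geq 0\}$ or $\{z_B-\theta\dd:\theta\geq 0\}$ stays inside the region $x\geq\zero$, $\mm\geq\zero$, $\omega\geq 0$ for all $\theta\geq 0$, and it automatically satisfies the linear equation; this is the desired infinite ray in $\mathcal{P}(\ee)$ originating at $z_B$.

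The step I expect to be the main obstacle is the bookkeeping around the $\pp$-variables and the sign constraint $\pp\geq\zero$ that appears in the definition of $\mathcal{P}(\ee)$: the line $\ell$ is free in the $\pp$-coordinates, and one must check that along the relevant half-line the $\pp$-coordinates do not force us out of $\mathcal{P}(\ee)$ — presumably this is handled either by the convention/earlier setup that $\pp$ is effectively unrestricted (the paper stresses it wants to treat `$\pp$' as unsigned), or by observing that the claimed ray need only certify that the \emph{algorithm} cannot continue, so the relevant ray is in the polyhedron $Q$ of Lemma~\ref{lem:pivotout} with the $\pp$-coordinates unconstrained, and $\mathcal{P}(\ee)$ here should be read with that same convention. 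I would reconcile this with the displayed definition of $\mathcal{P}(\ee)$ before writing the final argument. The remaining potential subtlety is confirming that $\dd$ has a nonzero $\omega$- or $(x,\mu)$-component at all (so that some inequality \emph{can} in principle become tight) — but if $\dd$ were supported only on $\pp$-coordinates, then $\bigl(\begin{smallmatrix}\zero\\M^T\end{smallmatrix}\bigr)$ would have a nontrivial kernel, contradicting that $M^T$ has full column rank (which follows from $M=\diag((M^k)_k)$ having full row rank, each $M^k$ being an incidence matrix with one row deleted from each connected component). Once these points are pinned down, the argument is the routine ``ratio test / unbounded edge'' dichotomy.
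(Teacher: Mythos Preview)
Your approach is essentially the same as the paper's: it too invokes the one-dimensional line from the proof of Lemma~\ref{lem:pivotout}, notes that feasible bases in $B\cup\{i\}$ correspond to intersections of this line with the boundary of $Q$, and concludes that if $B$ is the only such basis then (by non-degeneracy) one half of the line is an infinite ray of feasible points. Your worry about the constraint $\boldsymbol{\pi}\geq\zero$ in the displayed definition of $\mathcal{P}(\ee)$ is well spotted --- this is evidently a typo in the paper (everywhere else $\pi_v^k\in\R$ is unsigned, and the paper's own proof works inside $Q$, which imposes no sign on $\boldsymbol{\pi}$); with that reading your argument goes through and your extra checks on the support of $\dd$ are correct but not strictly needed.
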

\begin{proof}%[Proof of Lemma~\ref{lem:infiniteray}]
The proof is similar as the one of Lemma~\ref{lem:pivotout}, of which we take the same notions and notations. If $B$ is the only feasible basis, then the line intersects the boundary of $Q$ exactly once. Because of the non-degeneracy assumption, it implies that there is an infinite ray originating at $(\bar{\boldsymbol{x}},\bar{\boldsymbol{\mu}},\bar{\omega},\bar{\boldsymbol{\pi}})$ and whose points are all feasible.
\end{proof}

\subsubsection{Complementarity and twin indices}\label{subsec:comp}
A basis $B$ is said to be {\em complementary} if for every $(a,k)$ with $a\in A^k$, we have $(a,k)\notin B^x$ or $(a,k)\notin B^{\mu}$: for each $(a,k)$, one of the components $x_a^k$ or  $\mu_a^k$ is not activated in the basic solution. In case of non-degeneracy, it coincides with the condition $\boldsymbol{x}\cdot\boldsymbol{\mu}=0$. An important point to be noted for a complementary basis $B$ is that if $o\in B$, then there is $(a_0,k_0)$ with $a_0\in A^{k_0}$ such that 
\begin{itemize}
\item $(a_0,k_0)\notin B^x$ and $(a_0,k_0)\notin B^{\mu}$, and
\item for all $(a,k)\neq(a_0,k_0)$ with $a\in A^k$, exactly one of the relations $(a,k)\in B^x$ and $(a,k)\in B^{\mu}$ is satisfied.
\end{itemize} 
This is a direct consequence of the fact that there are exactly $\sum_{k\in K}|A^k|$ elements in a basis and that each $(a,k)$ is not present in at least one of $B^x$ and $B^{\mu}$. In case of non-degeneracy, this point amounts to say that $x_a^k=0$ or $\mu_a^k=0$ for all $(a,k)$ with $a\in A^k$ and that there is exactly one such pair, denoted $(a_0,k_0)$, such that both are equal to $0$. 

We say that $\phi^x(a_0,k_0)$ and $\phi^{\mu}(a_0,k_0)$ for such $(a_0,k_0)$ are the {\em twin indices}.

\subsubsection{Initial feasible basis}\label{subsec:init}

A good choice of $\ee$ gives an easily computable initial feasible complementary basis to the program~\eqref{pb:AMNEP}. %The remaining of this subsection is devoted to the computation of such an initial basis. %As for multicommodity flows problems, we expect that a basis is given by the choice of a tree for each commodity. We will see that it is the case.

An {\em $s$-arborescence} in a directed graph is a spanning tree rooted at $s$ that has a directed path from $s$ to any vertex of the graph.
We arbitrarily define a collection  $\mathcal{T}=(T^k)_{k\in K}$ where $T^k\subseteq A^k$ is an $s^k$-arborescence of $(V^k,A^k)$. Then the vector $\ee=(e_a^k)_{k\in K, a\in A^k}$ is chosen with the help of $\mathcal T$ by 
\begin{equation}\label{eq:defe}e_a^k=\left\{\begin{array}{ll} 1 & \mbox{if $a\notin T^k$} \\ 0 & \mbox{otherwise}.\end{array}\right.\end{equation} 
% With this choice of $\ee$ in problem~\eqref{pb:AMNEP}, there is an initial feasible complementary basis easily computable. The remaining of this subsection is devoted to the computation of such an initial basis. \\

\begin{lemma}\label{lem:initbasis}
 Let the set of indices $Y\subseteq\X\cup\M\cup\{o\}$ be defined by 
$$Y=\{\phi^x(a,k):\,a\in T^k, k\in K\}\cup\{\phi^{\mu}(a,k):\,a\in A^k\setminus T^k, k\in K\}\cup\{o\}.$$
Then, one of the following situations occurs:
\begin{itemize}
 \item[$\bullet$] $Y\setminus \{o\}$ is a complementary feasible basis providing an optimal solution of the program \eqref{pb:AMNEP} with $\omega=0$.
 \item[$\bullet$] There exists $(a_0,k_0)$ such that $B^{ini}=Y\setminus\{\phi^{\mu}(a_0,k_0)\}$ is a feasible complementary basis for the program~\eqref{pb:AMNEP}.
\end{itemize}
\end{lemma}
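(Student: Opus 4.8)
The plan is to analyze the matrix
$$\left(\begin{array}{c|c}\overline{M}^{\ee}_Y & \begin{array}{c}\zero\\ M^T\end{array}\end{array}\right)$$
associated to the index set $Y$ and show it is nonsingular, then describe the basic solution explicitly and discuss the sign pattern. First I would write out the columns: for $\phi^x(a,k)$ with $a\in T^k$ the column is $\left(\begin{array}{c}M^k e_a \\ C^k e_a\end{array}\right)$ (in the block for class $k$), for $\phi^{\mu}(a,k)$ with $a\in A^k\setminus T^k$ the column is $\left(\begin{array}{c}\zero \\ -e_a\end{array}\right)$, the column of $o$ is $\left(\begin{array}{c}\zero\\ \ee\end{array}\right)$ where by~\eqref{eq:defe} $\ee$ is the indicator of the non-tree arcs, and the `$\pi$' columns form $\left(\begin{array}{c}\zero\\ M^T\end{array}\right)$. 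Since everything is block-diagonal across classes, nonsingularity reduces to a per-class statement. Within class $k$, the key fact is that an $s^k$-arborescence $T^k$ gives a nonsingular $(|V^k|-1)\times(|V^k|-1)$ submatrix of $M^k$ (the incidence matrix with the $s^k$-row deleted restricted to tree arcs is, up to sign and permutation, triangular); this pins down the `$x$' variables on tree arcs from the flow conservation equations, then the `$\mu$' columns on non-tree arcs (which are $\pm$ unit columns in the lower block) together with the $o$-column let one solve the cost equations for the `$\pi$' variables after the lower block has been reduced. I would make this precise by exhibiting a triangular order: first solve for $\boldsymbol{x}$ on $T^k$ using the arborescence structure, then read off $\mu_a^k$ for $a\notin T^k$ from the cost equation on $a$, and finally solve for $\boldsymbol{\pi}^k$ from the cost equations on the tree arcs (which form a triangular system along the arborescence, using $\pi_{s^k}^k=0$), with $\omega$ appearing only through the $e_a^k$ on non-tree arcs — so the cost equation on a tree arc does not involve $\omega$, and the system closes.

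Next I would compute the basic solution and check feasibility. Setting $x_a^k=0$ off $T^k$ and $\mu_a^k=0$ on $T^k$, the flow conservation equations force $\boldsymbol{x}^k$ to be the unique flow supported on the arborescence $T^k$ routing $b^k$; because $b_{s^k}^k=\lambda(I^k)>0$, $b_{t^k}^k=-\lambda(I^k)<0$ and, crucially, $\boldsymbol b$ has been modified so that $b_v^k>0$ for the intermediate vertices as well (this is exactly the point stressed in Section~\ref{subsec:basic} about why non-degeneracy can be assumed), every tree arc carries strictly positive flow, so $\bar x_a^k>0$ for $a\in T^k$. With $\omega$ still free at this stage, the cost equation on a non-tree arc $a=(u,v)$ reads $\alpha_a^k\sum_{k'}\bar x_a^{k'}+\bar\pi_u^k-\bar\pi_v^k-\mu_a^k+\omega=-\beta_a^k$, i.e. $\mu_a^k=\alpha_a^k\sum_{k'}\bar x_a^{k'}+\beta_a^k+\bar\pi_u^k-\bar\pi_v^k+\omega$. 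The idea is that choosing $\omega$ large enough makes all these $\mu_a^k$ positive simultaneously; concretely, there is a threshold $\omega^\ast$ such that for $\omega>\omega^\ast$ every $\bar\mu_a^k>0$. Two cases arise: if $\omega^\ast\le 0$ we may take $\omega=0$, and then $Y\setminus\{o\}$ is already a feasible complementary basis with $\bar\omega=0$, which by the key remark in Section~\ref{subsec:opt} is an optimal solution of~\eqref{pb:AMNEP} (first bullet). Otherwise $\omega^\ast>0$, and we cannot keep $o$ in the basis at value $0$; instead we drop one $\mu$-index. Here I would argue that dropping the index $\phi^{\mu}(a_0,k_0)$ where $(a_0,k_0)$ achieves the binding constraint (the non-tree arc forcing $\omega^\ast$) yields a basis $B^{ini}=Y\setminus\{\phi^{\mu}(a_0,k_0)\}$: the matrix stays nonsingular because exchanging that $\pm$ unit $\mu$-column for the $o$-column still leaves a nonsingular system (the $o$-column has a nonzero entry in the row of $a_0$), and the associated basic solution has $\bar\omega=\omega^\ast>0$, $\bar x>0$ on the tree arcs, $\bar\mu_a^k>0$ for the remaining non-tree arcs, and $\mu_{a_0}^{k_0}=0$, hence feasible; complementarity is immediate since $B^x$ lives on tree arcs and $B^{\mu}$ on non-tree arcs, which are disjoint.

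For complementarity of $B^{ini}$ I would note that $(B^{ini})^x=\{(a,k):a\in T^k\}$ and $(B^{ini})^{\mu}=\{(a,k):a\in A^k\setminus T^k\}\setminus\{(a_0,k_0)\}$ are disjoint with union missing exactly $(a_0,k_0)$, so $B^{ini}$ is complementary and $(a_0,k_0)$ supplies the twin indices of Section~\ref{subsec:comp}; and $o\in B^{ini}$, as required for the starting point of the Lemke-like iteration. I expect the main obstacle to be the careful bookkeeping of the triangular elimination order establishing nonsingularity for both $Y\setminus\{o\}$ and $B^{ini}$ — in particular, verifying that after using the arborescence to solve for $\boldsymbol x^k$ and the non-tree $\mu$-columns to solve for the corresponding $\mu_a^k$, the residual system in $(\boldsymbol\pi^k,\omega)$ (or just $\boldsymbol\pi^k$, in the $B^{ini}$ case) is genuinely triangular and invertible; this hinges on the arborescence giving a triangular incidence structure and on the $o$-column meeting the row of $a_0$, which is where Remark~\ref{rem:size} about every early row having a nonzero entry is doing its work. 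The rest — writing down the basic solution, choosing $\omega^\ast$, checking signs using $b_v^k>0$ — is routine once the algebra of the incidence matrix of an arborescence is set up.
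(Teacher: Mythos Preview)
Your approach is essentially the paper's: solve for the tree flows, then the potentials from the tree cost equations, then the slacks on non-tree arcs as affine functions of $\omega$, and finally choose $\omega$ to make everything nonnegative. A few points need correcting, though.

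First, the elimination order you state is backwards. You write ``read off $\mu_a^k$ for $a\notin T^k$ from the cost equation on $a$, and finally solve for $\boldsymbol{\pi}^k$,'' but the non-tree cost equation contains $\pi_u^k-\pi_v^k$, so you must solve for $\boldsymbol\pi^k$ first (from the tree cost equations, which involve neither $\mu$ nor $\omega$ since $e_a^k=0$ and $\mu_a^k=0$ there), and only then read off the non-tree $\mu$'s. You clearly know this, since two paragraphs later you write $\mu_a^k$ in terms of $\bar\pi$; just fix the earlier sentence.

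Second, the matrix attached to $Y$ is not square: $|Y|=\sum_k|A^k|+1$, one more than a basis, so the system has a one-dimensional solution set parameterized by $\omega$. That is exactly what you use (``$\omega$ still free at this stage''), so drop the claim that the $Y$-matrix is ``nonsingular.''

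Third, and more substantively, your justification that $\bar x_a^k>0$ on tree arcs is both unnecessary and wrong as stated. The lemma does \emph{not} assume non-degeneracy or any perturbation of $\bb$. And the claim that ``$b_v^k>0$ for the intermediate vertices'' forces every tree arc to carry strictly positive flow is false: in an $s^k$-arborescence, the flow on the arc into $v$ from its parent equals $-\sum_{w}b_w^k$ over all descendants $w$ of $v$; if $v$ is a leaf of $T^k$ with $v\neq t^k$ and $b_v^k>0$, this gives $\bar x^k<0$ on that arc. What you actually need is only $\bar x_a^k\ge 0$, and with the unperturbed $\bb$ (zeros on intermediate vertices) this is immediate: the tree flow is $\lambda(I^k)$ on the unique $s^k$--$t^k$ path in $T^k$ and $0$ on the remaining tree arcs. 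The paper's proof likewise asserts only nonnegativity.

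Finally, the reference to Remark~\ref{rem:size} is misplaced: that remark is used in Lemma~\ref{lem:nosecondary}, not here. Nonsingularity of the $B^{ini}$-matrix follows directly from the elimination order $\xx\to\pp\to\omega\to\mm$, once the steps are put in the correct sequence.
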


\begin{proof}%[Proof of Lemma~\ref{lem:initbasis}]
 The subset $Y$ has cardinality $\sum_{k\in K}|A^k|+1$. To show that $Y$ contains a feasible complementary basis, we proceed by studying the solutions of the system
 
\begin{equation}\tag{$S^{\ee}$}\label{eq:Y}
\left\{\begin{array}{l}
\left(\begin{array}{c|c}\overline{M}^{\ee}_Y & \begin{array}{c} \zero \\ M^T \end{array}\end{array}\right)\left(\begin{array}{c}\boldsymbol{x}_{Y^x} \\ \boldsymbol{\mu}_{Y^{\mu}} \\ \omega \\ \boldsymbol{\pi}\end{array}\right)=\left(\begin{array}{c}\boldsymbol{b} \\ -\boldsymbol{\beta} \end{array}\right) \\
x_a^k = 0\quad\mbox{ for all $(a,k)\notin Y^x$} \\
\mu_a^k = 0\quad\mbox{ for all $(a,k)\notin Y^{\mu}$}.
\end{array}\right.
\end{equation} 
It is convenient to rewrite the system~\eqref{eq:Y} in the following form.
\begin{align}\mbox{For}&\mbox{ all } k\in K, \nonumber\\ 
&\left\{\begin{array}{ll}
M_{T^k}^k x_{T^k}^k = b^k  &    \\ %\\
\alpha_{uv}^k\displaystyle{\sum_{k'\in K}x^{k'}_{uv} + \pi^k_u - \pi^k_v - \mu^k_{uv} +e_{uv}^k\omega= - \beta_{uv}^k} & \mbox{ for all } (u,v) \in A^k  \\ %\\
 x_a^k = 0 & \mbox{ for all }  a\notin T^k   \\
 \mu_a^k = 0 & \mbox{ for all }   a\in T^k. 
\end{array}\right.\label{eq:S}\end{align}

The matrix $M_{T^k}^k$ is nonsingular (see the book by \cite{AMO93}). It gives a unique solution $x_{T^k}^k$ of the first equation of \eqref{eq:S}, and since $x_a^k = 0$ for $a\notin T^k$, we get a unique solution  $\boldsymbol{x}$ to system~\eqref{eq:Y}.

We look now at the second equation of \eqref{eq:S} for $k$ and $(u,v)$ such that $(u,v)\in T^k$. We get that any solution of system~\eqref{eq:Y} satisfies the equalities
$$\alpha_{uv}^k\sum_{k'\in K}x_{uv}^{k'} + \pi^k_u - \pi^k_v = - \beta_{uv}^k, \quad \mbox{ for all $k\in K$ and $(u,v)\in T^k$}.$$
Indeed, if $(u,v)\in T^k$, we have $e_{uv}^k=0$ and $\mu_{uv}^k=0$. Recall that we defined $\pi_{s^k}^k=0$. Since $T^k$ is a spanning tree of $(V^k,A^k)$ for all $k$, these equations completely determine $\boldsymbol{\pi}$.

We look then at the second equation of \eqref{eq:S}, this time for $k$ and $(u,v)$ such that $(u,v)\notin T^k$. We get that any solution of system~\eqref{eq:Y} satisfies the equalities
%since for $a\notin T^k$ we have $e_a^k=1$ and $\phi^{x}(a,k) \notin Y$, any solution of system~\eqref{eq:Y} satisfies the equalities 
\begin{equation}\label{eq:mu}
\alpha_{uv}^k\sum_{k'\neq k}x_{uv}^{k'} -\mu_{uv}^k+\omega+\pi^k_u-\pi^k_v=-\beta_{uv}^k, \quad \mbox{ for all $k\in K$ and $(u,v)\notin T^k$}. 
\end{equation} Indeed, if $(u,v)\notin T^k$, we have $e_{uv}^k=1$ and $x_{uv}^k=0$.

If $\alpha_{uv}^k x_{uv}+\beta_{uv}^k+\pi^k_u-\pi^k_v\geq 0$ for all $k\in K$ and $(u,v)\notin T^k$, then we have an optimal solution of the program~\eqref{pb:AMNEP} with $\omega=0$, and we get the first point of Lemma~\ref{lem:initbasis}. We can thus assume that $\alpha_{uv}^k x_{uv}+\beta_{uv}^k+\pi^k_u-\pi^k_v<0$ for at least one triple $u,v,k$. Let $u_0,v_0,k_0$ be such a triple minimizing $\alpha_{uv}^k x_{uv}+\beta_{uv}^k+\pi^k_u-\pi^k_v$ and let $a_0=(u_0,v_0)$. Note that Equation~\eqref{eq:mu} implies that 
\begin{equation}\label{eq:mupositive}
 \mu_{uv}^k \geq \mu_{u_0v_0}^{k_0}, \quad \mbox{ for all $k\in K$ and $(u,v)\notin T^k$}.
\end{equation}

We finish the proof by showing that $B^{ini}$, defined as $Y\setminus\{\phi^{\mu}(a_0,k_0)\}$, is a feasible complementary basis for the program~\eqref{pb:AMNEP}. For $B^{ini}$, system~\eqref{eq:basicsol} has a unique solution. Indeed, the first part of the proof devoted to the solving of~\eqref{eq:Y} has shown that $\boldsymbol{x}$ and $\boldsymbol{\pi}$ are uniquely determined, without having to compute the values of the $\mu_a^k$'s. By definition of $(a_0,k_0)$, since $\phi^{\mu}(a_0,k_0)$ is not in $B^{ini}$, we have $$\mu_{u_0v_0}^{k_0}=0\quad\mbox{and}\quad\omega = -\alpha_{u_0v_0}^{k_0} x_{u_0v_0}- \beta_{u_0v_0}^{k_0}-\pi^{k_0}_{u_0}+\pi^{k_0}_{v_0}.$$ Finally, Equation~(\ref{eq:mu}) determines the values of the $\mu_{uv}^k$ for $k\in K$ and $(u,v) \notin T^k$, and Equation~(\ref{eq:mupositive}) ensures that these values are nonnegative. Therefore, $B^{ini}$ is a basis, and it is feasible because all $x_a^k$ and $\mu_a^k$ in the solution are nonnegative. Furthermore, for each $(a,k)$ with $a\in A^k$, at least one of $\phi^x(a,k)$ and $\phi^{\mu}(a,k)$ is not in $B^{ini}$. Hence, the subset $B^{ini}$ is a feasible complementary basis.
\end{proof}
 We emphasize that $B^{ini}$ depends on the chosen collection $\mathcal{T}$ of arborescences. %Note moreover that for all $k \in K$, we have $x_a^k = 0$ if $a \notin  T^k$ and $\mu_a^k=0$ if $a \in T^k$.
Note that the basis $B^{ini}$ is polynomially computable.

\begin{remark}\label{rem:base}
A short examination of the proof makes clear that the following claim is true: {\em Assuming non-degeneracy, if $B$ is a feasible basis such that $B^x=\{(a,k):\,a\in T^k,\,k\in K\}$, then $B=B^{ini}$.}
The fact that the $T^k$ are arborescences fixes completely $\boldsymbol{x}$, and then $\boldsymbol{\pi}$. The fact that $B$ is a feasible basis forces $\omega$ to be equal to the maximal value of $-\alpha_{uv}^k x_{uv}- \beta_{uv}^{k}-\pi^{k}_{u}+\pi^{k}_{v}$ (except of course if this value is nonpositive, in which case we have already solved our problem), which in turn fixes the values of the $\mu_{uv}^k$.
\end{remark}

\begin{remark}\label{rem:degeneracy}
As already announced in Section~\ref{subsec:basic}, if we had defined the vector $\boldsymbol{b}$ to be $0$ on all vertices $v\notin\{s^k,t^k\}$, the problem would not satisfy the non-degeneracy assumption as soon as there is $k\in K$ such that $T^k$ has a vertex of degree $3$ (which happens when $(V^k,A^k)$ has no Hamiltonian path). In this case, the basis $B^{ini}$ shows that the problem is degenerate. Since the unique solution $\boldsymbol{x}^k_{T^k}$ of $M_{T^k}^k \boldsymbol{x}^k_{T^k} = \boldsymbol{b}^k$ consists in sending the whole demand on the unique path in $T^k$ from $s^k$ to $t^k$, we have for all arcs $a \in T^k$ not belonging to this path $x_a^k =0$ while $(a,k)\in B^{ini,x}$.
\end{remark}

\subsubsection{No secondary ray} \label{subsec:ray}

Let $(\bar{\boldsymbol{x}}^{ini},\bar{\boldsymbol{\mu}}^{ini},\bar{\omega}^{ini},\bar{\boldsymbol{\pi}}^{ini})$ be the feasible basic solution associated to the initial basis $B^{ini}$, computed according to Lemma~\ref{lem:initbasis} and with $\ee$ given by Equation~\eqref{eq:defe}. 
The following inifinite ray
$$\rho^{ini}=\left\{(\bar{\boldsymbol{x}}^{ini},\bar{\boldsymbol{\mu}}^{ini},\bar{\omega}^{ini},\bar{\boldsymbol{\pi}}^{ini})+t(\zero,\ee,1,\zero):\, t\geq0\right\},$$ has all its points in $\mathcal{P}(\ee)$. This ray with direction $(\zero,\ee,1,\zero)$ is called the {\em primary ray}. In the terminology of the Lemke algorithm, another infinite ray originating at a solution associated to a feasible complementary basis is called a {\em secondary ray}. Recall that we defined $\pi_{s^k}^k=0$ for all $k \in K$ in Section~\ref{sec:formulation} (otherwise we would have a trivial secondary ray). System~\eqref{pb:AMNEP} has no secondary ray for the chosen $\ee$.% chosen according to Equation~\eqref{eq:defe} in Section~\ref{subsec:init}.

\begin{lemma}\label{lem:nosecondary}
Let $\ee$ be defined by Equation~\eqref{eq:defe}. Under the non-degeneracy assumption, there is no secondary ray in $\mathcal{P}(\ee)$.
\end{lemma}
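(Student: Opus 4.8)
The plan is to argue by contradiction. Suppose $\mathcal{P}(\ee)$ contains a secondary ray $R=\{\bar z+t\dd:\,t\geq 0\}$, where $\bar z=(\bar{\xx},\bar{\mm},\bar{\omega},\bar{\pp})$ is the basic solution of a feasible complementary basis $B$ (necessarily with $o\in B$, as for every ray occurring in the Lemke setting) and $\dd=(\Delta\xx,\Delta\mm,\Delta\omega,\Delta\pp)\neq\zero$. Because $R\subseteq\mathcal{P}(\ee)$, the vector $\dd$ lies in the recession cone of $\mathcal{P}(\ee)$: $M\Delta\xx=\zero$, the homogeneous cost relations $\alpha_{uv}^{k}\sum_{k'\in K}\Delta x_{uv}^{k'}-\Delta\mu_{uv}^{k}+e_{uv}^{k}\Delta\omega+\Delta\pi_u^{k}-\Delta\pi_v^{k}=0$ for all $k\in K$ and $(u,v)\in A^k$, and $\Delta\xx\geq\zero,\ \Delta\mm\geq\zero,\ \Delta\omega\geq 0,\ \Delta\pp\geq\zero$. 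Moreover $R$ is traced by increasing the unique entering index while the remaining non-basic variables stay at $0$, so it is almost-complementary: $x_a^k\mu_a^k=0$ holds all along $R$. Under the non-degeneracy assumption this yields that for every $(a,k)$ at most one of $\Delta x_a^k,\Delta\mu_a^k$ is positive, that $\Delta\mu_a^k=0$ whenever $\bar x_a^k>0$, and that $\Delta x_a^k=0$ whenever $\bar\mu_a^k>0$.

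First I would show $\Delta\xx=\zero$. From $M\Delta\xx=\zero$ and $\Delta\xx\geq\zero$, each $\Delta\xx^{k}$ is a nonnegative circulation on $(V^k,A^k)$; if it were nonzero, its support would contain a directed cycle $O$. On every arc $a=(u,v)$ of $O$ we have $\Delta x_a^k>0$, hence $\Delta\mu_a^k=0$ by almost-complementarity, and the cost relation at $(a,k)$ becomes $\alpha_a^{k}\sum_{k'}\Delta x_a^{k'}+e_a^k\Delta\omega=\Delta\pi_v^k-\Delta\pi_u^k$. Summing over $a\in O$, the right-hand side telescopes to $0$, while the left-hand side is a sum of nonnegative terms one of which, $\alpha_a^k\sum_{k'}\Delta x_a^{k'}\geq\alpha_a^k\Delta x_a^k>0$ (using $\alpha_a^k>0$), is strictly positive — a contradiction. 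Hence $\Delta\xx=\zero$, and the cost relations simplify to $\Delta\mu_{uv}^k=e_{uv}^k\Delta\omega+\Delta\pi_u^k-\Delta\pi_v^k$ for all $k$ and $(u,v)\in A^k$.

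Next I would identify $\dd$ with a multiple of the primary direction $(\zero,\ee,1,\zero)$ and then force $B=B^{ini}$. Since $\Delta\mm\geq\zero$ and $e_{uv}^k=0$ for $(u,v)\in T^k$, the last identity gives $\Delta\pi_v^k\leq\Delta\pi_u^k$ along every arc of the arborescence $T^k$; as $T^k$ is an $s^k$-arborescence and $\Delta\pi_{s^k}^k=0$, walking along $T^k$-paths from $s^k$ yields $\Delta\pp^k\leq\zero$ for each $k$. Combined with the constraint $\Delta\pp\geq\zero$ of $\mathcal{P}(\ee)$ this forces $\Delta\pp=\zero$, hence $\Delta\mm=\Delta\omega\,\ee$ and $\dd=\Delta\omega(\zero,\ee,1,\zero)$; as $\dd\neq\zero$, necessarily $\Delta\omega>0$, and after rescaling $\dd=(\zero,\ee,1,\zero)$. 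Along $R$ the variable $\xx$ then stays equal to $\bar\xx$ while $\mu_a^k=\bar\mu_a^k+t\,e_a^k$, so almost-complementarity $\bar x_a^k(\bar\mu_a^k+t\,e_a^k)=0$ for all $t\geq 0$ forces $\bar x_a^k=0$ whenever $e_a^k=1$, i.e. $\supp(\bar{\xx}^k)\subseteq T^k$, which by non-degeneracy means $(B^x)^k\subseteq T^k$. On the other hand, by Remark~\ref{rem:size} the top $\sum_k(|V^k|-1)$ rows of the nonsingular basis matrix form, up to zero columns, the matrix $M_{B^x}$, which therefore has full row rank; hence $(B^x)^k$ contains a spanning tree of $(V^k,A^k)$ and $|(B^x)^k|\geq|V^k|-1=|T^k|$. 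Thus $(B^x)^k=T^k$ for every $k$, so $B^x=\{\phi^x(a,k):\,a\in T^k,\,k\in K\}$, and Remark~\ref{rem:base} gives $B=B^{ini}$. Consequently $\bar z=(\bar{\xx}^{ini},\bar{\mm}^{ini},\bar{\omega}^{ini},\bar{\pp}^{ini})$ and $R=\rho^{ini}$, contradicting that $R$ is a secondary (hence not the primary) ray.

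The step I expect to be the main obstacle is the careful handling of complementarity and of the sign constraints on $\dd$ under non-degeneracy — pinning down exactly which coordinates of $\Delta\xx$ and $\Delta\mm$ must vanish — together with checking that the description of the recession cone of $\mathcal{P}(\ee)$ (in particular the availability of $\Delta\pp\geq\zero$) is used correctly, and disposing of the borderline case where the initial basis of Lemma~\ref{lem:initbasis} already solves the problem (so that $o\notin B$), which is handled directly via that lemma. Once this bookkeeping is in place, the cycle-summation argument for $\Delta\xx=\zero$ and the two "walk along a tree/arborescence" arguments are routine.
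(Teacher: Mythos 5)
Your proposal tracks the paper's strategy for most of its length: the cycle-summation argument giving $\boldsymbol{x}^{dir}=\zero$, the walk down the arborescences $T^k$ giving $\boldsymbol{\pi}^{dir}\leq\zero$, and the endgame identifying $B$ with $B^{ini}$ via Remark~\ref{rem:base} are essentially the paper's steps (your closing argument, which pins down $B^x$ as the tree indices by combining $\bar x_a^k e_a^k=0$ with a row-rank count, is if anything a little cleaner than the paper's case distinction). The one place where you diverge is exactly the step you flagged, and it is a genuine gap: you obtain $\boldsymbol{\pi}^{dir}\geq\zero$ by reading $\boldsymbol{\pi}\geq\zero$ off the displayed definition of $\mathcal{P}(\ee)$ and passing to the recession cone. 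That inequality cannot be relied upon. In \eqref{pb:AMNEP} the variables $\pi_v^k$ are free; the sentence introducing $\mathcal{P}(\ee)$ asserts that every feasible solution of \eqref{pb:AMNEP} lies in it, which would be false with $\boldsymbol{\pi}\geq\zero$ imposed; and the rays produced by Lemma~\ref{lem:infiniteray} --- the rays this lemma must rule out --- come from the polyhedron $Q$, which places no sign constraint on $\boldsymbol{\pi}$. The $\boldsymbol{\pi}\geq\zero$ appearing in that display is an inconsistency of the text, and the paper's own proof confirms this by spending a full paragraph establishing $\boldsymbol{\pi}^{dir}\geq\zero$ by other means. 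Without that inequality your argument only yields $\boldsymbol{\pi}^{dir}\leq\zero$, which is not enough to force the direction to be a multiple of $(\zero,\ee,1,\zero)$.

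The missing piece is supplied by complementarity on the basic $x$-arcs. Let $F^k=\{a\in A^k:\;(a,k)\in B^x\}$. By Remark~\ref{rem:size}, nonsingularity of the basis matrix forces the rows of $M$ indexed by $V^k\setminus\{s^k\}$ to have nonzero entries among the columns of $B^x$, so $F^k$ spans $V^k$. Non-degeneracy gives $\bar x_a^k>0$ on $F^k$; complementarity along the ray together with $\boldsymbol{x}^{dir}=\zero$ gives $\bar{\boldsymbol{x}}\cdot\boldsymbol{\mu}^{dir}=0$, hence $\mu_a^{dir,k}=0$ on $F^k$, and the cost relation there reads $\pi_v^{dir,k}-\pi_u^{dir,k}=e_{uv}^k\omega^{dir}\geq 0$, which propagates $\pi_v^{dir,k}\geq\pi_{s^k}^{dir,k}=0$ across $V^k$. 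With $\boldsymbol{\pi}^{dir}=\zero$ established this way, the remainder of your proof goes through as written.
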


\begin{proof}%[Proof of Lemma~\ref{lem:nosecondary}]
Suppose that $\mathcal{P}(\ee)$ contains an infinite ray $$\rho=\left\{(\bar{\boldsymbol{x}},\bar{\boldsymbol{\mu}},\bar{\omega},\bar{\boldsymbol{\pi}})+t(\boldsymbol{x}^{dir},\boldsymbol{\mu}^{dir},\omega^{dir},\boldsymbol{\pi}^{dir}):\, t\geq0\right\},$$ where $(\bar{\boldsymbol{x}},\bar{\boldsymbol{\mu}},\bar{\omega},\bar{\boldsymbol{\pi}})$ is a feasible complementary basic solution associated to a basis $B$. \\

We first show that $\boldsymbol{x}^{dir}=0$. For a contradiction, suppose that it is not the case and let $k$ be such that $\boldsymbol{x}^{dir,k}$ is not zero. Since the points of $\rho$ must satisfy the system~\eqref{pb:AMNEP} for all $t\geq 0$, we have that $(\boldsymbol{x}^{dir},\boldsymbol{\mu}^{dir},\omega^{dir},\boldsymbol{\pi}^{dir})$ must satisfy for all $v\in V^k$
$$\sum_{a\in \delta^+(v)} x_a^{dir,k} = \sum_{a\in \delta^-(v)} x_a^{dir,k},$$ which shows that $\boldsymbol{x}^k$ is a circulation in the directed graph $(V^k,A^k)$. Moreover, we must have for all $(u,v)\in A^k$
\begin{equation}\label{eq:cost}
\begin{array}{c}
\displaystyle{\alpha_{uv}^k\sum_{k'\in K}x^{dir,k'}_{uv} + \pi^{dir,k}_u - \pi^{dir,k}_v - \mu^{dir,k}_{uv} +e_{uv}^k\omega^{dir}=0}.
\end{array}
\end{equation} where we have $\pi_{s^k}^{dir,k}=0$ since $\pi_{s^ k}^k=0$ for any feasible solution of~\eqref{pb:AMNEP}, see Section~\ref{sec:formulation}. The following relations must also be satisfied:
\begin{equation}\label{eq:xmu}
\boldsymbol{x}^{dir}\cdot\boldsymbol{\mu}^{dir}=0,
\end{equation} and
\begin{equation}\label{eq:omega}
\boldsymbol{x}^{dir}\geq\boldsymbol{0},\boldsymbol{\mu}^{dir}\geq\boldsymbol{0},\omega^{dir}\geq 0.
\end{equation} 
Take now any circuit $C$ in $D=(V,A)$ in the support of $\boldsymbol{x}^{dir,k}$. Since we have supposed that $\boldsymbol{x}^{dir,k}$ is not zero and since it is a circulation, such a circuit necessarily exists. According to Equations~\eqref{eq:xmu} and~\eqref{eq:omega}, we have $\mu_a^{dir,k}=0$ for each $a\in C$.
The sum $\sum_{a\in C}e_a^k$ is nonzero since no tree $T^k$ can contain all arcs in $C$. Summing Equation~\eqref{eq:cost} for all arcs in $C$, we get $$\omega^{dir}=-\frac{\sum_{a\in C}\alpha_a^k\sum_{k'\in K}x_a^{dir,k'}}{\sum_{a\in C}e_a^k}<0.$$ It is in contradiction with Equation~\eqref{eq:omega}. It implies that $x_a^{dir,k}=0$ for all $k\in K$ and $a\in A^k$. \\

We show now that $\boldsymbol{\pi}^{dir}=0$. We start by noting that Equation~\eqref{eq:cost} becomes $$\pi_u^{dir,k}-\pi_v^{dir,k}-\mu_{uv}^{dir,k}=0, \quad \mbox{ for all $k\in K$ and $(u,v)\in T^k$}.$$ Since $T^k$ is an $s^k$-arborescence, we have $0=\pi_{s^k}^{dir,k}\geq\pi_v^{dir,k}$ for all $v\in V^k$, according to Equation~\eqref{eq:omega}. 

Define now $F^k$ to be the set of arcs $a\in A^k$ such that $(a,k)\in B^x$. Using Remark~\ref{rem:size} of Section~\ref{subsec:bases}, $\overline{M}^{\ee}_B$ has a nonzero entry on each of its first $\sum_{k\in K}(|V^k|-1)$ rows, which implies that the set $F^k$ spans all vertices in $V^k\setminus\{s^k\}$.%
% Using Remark~\ref{rem:size}, $\overline{M}^{\ee}_B$ must have a nonzero entry on each of the first $(\sum_{k\in K}|V^k|-1)$ rows of $\overline{M}^{\ee}$. Then, the set $F^k$ spans all vertices in $V^k\setminus\{s^k\}$.%

% According to the non-degeneracy assumption, $\bar{x}_a^k$ is nonzero on all arcs of $F^k$, which implies that there is a directed path in $F^k$ from $s^k$ to any vertex $v\in V^k$. 
According to the non-degeneracy assumption, $\bar{x}_a^k$ is nonzero on all arcs of $F^k$. The complementarity condition for all points of the ray give that $\bar{\boldsymbol{x}}\cdot\boldsymbol{\mu}^{dir}+\boldsymbol{x}^{dir}\cdot\bar{\boldsymbol{\mu}}=0$, and since $\boldsymbol{x}^{dir} = \zero$, we have 
$\bar{\boldsymbol{x}}\cdot\boldsymbol{\mu}^{dir}=0$. Hence $\mu_{uv}^{dir,k}=0$ for all $(u,v)\in F^k$, and Equation~\eqref{eq:cost} becomes 
\begin{equation}\label{eq:cost_bis}\pi_u^{dir,k}-\pi_v^{dir,k}+e_{uv}^k\omega^{dir}=0 \quad \mbox{ for all $k\in K$ and $(u,v)\in F^k$}.\end{equation}  Thus, according to Equation~\eqref{eq:omega}, we have $0=\pi_{s^k}^{dir,k}\leq\pi_v^{dir,k}$ for all $v\in V^k$. Since we have already shown the reverse inequality, we have $\pi_v^{dir,k}=0$ for all $v\in V^k$. \\

Now, if $T^k\neq F^k$ for at least one $k$, we get the existence of an arc $(u,v)\in F^k$ for which $e_{uv}^k=1$, while $\pi_u^{dir,k}=\pi_v^{dir,k}=0$. Equation~\eqref{eq:cost_bis} implies then that $\omega^{dir}=0$. Still using $\boldsymbol{x}^{dir}=\zero$, we get then, again with the help of Equation~\eqref{eq:cost}, that $\boldsymbol{\mu}^{dir}=\zero$, which contradicts the fact that $\rho$ is an infinite ray.

Therefore, we have $T^k=F^k$ for all $k$. Using Remark~\ref{rem:base} of Section~\ref{subsec:init}, we are at the initial basic solution: $B=B^{ini}$. According to Equation~\eqref{eq:cost}, and since $\boldsymbol{x}^{dir}=\zero$ and $\boldsymbol{\pi}^{dir}=\zero$, we have $\mu_{uv}^{dir,k}=e_{uv}^k\omega^{dir}$ for all $k\in K$ and $(u,v)\in A^k$. Thus $(\boldsymbol{x}^{dir},\boldsymbol{\mu}^{dir},\omega^{dir},\boldsymbol{\pi}^{dir})=\omega^{dir}(\zero,\ee,1,\zero)$ for $\omega^{dir}\geq 0$,  and $\rho$ is necessarily the primary ray $\rho^{ini}$. 

Then there is no secondary ray, as required.\end{proof}

\subsubsection{A Lemke-like algorithm}\label{sec:algo}
Assuming non-degeneracy, the combination of Lemma~\ref{lem:pivotout} and the point explicited in Section~\ref{subsec:comp} gives rise to a Lemke-like algorithm. Two feasible complementary bases $B$ and $B'$ are said to be {\em neighbors} if $B'$ can be obtained from $B$ by a pivot operation using one of the twin indices as an entering index, see Section~\ref{subsec:comp}. Note that is is a symmetrical notion: $B$ can then also be obtained from $B'$ by a similar pivot operation. The abstract graph whose vertices are the feasible complementary bases and whose edges connect neighbor bases is thus a collection of paths and cycles. According to Lemma~\ref{lem:initbasis}, we can find in polynomial time an initial feasible complementary basis for~\eqref{pb:AMNEP} with the chosen vector $\ee$. This initial basis has exactly one neighbor according to Lemma~\ref{lem:infiniteray} since there is a primary ray and no secondary ray (Lemma~\ref{lem:nosecondary}).

Algorithm~\ref{algo:pivot_compl} explains how to follow the path starting at this initial feasible complementary basis. Function \texttt{EnteringIndex}$(B,i')$ is defined for a feasible complementary basis $B$ and an index $i'\notin B$ being a twin index of $B$ and computes the other twin index $i\neq i'$. Function \texttt{LeavingIndex}$(B,i)$ is defined for a feasible complementary basis $B$ and an index $i\notin B$ and computes the unique index $j\neq i$ such that $B\cup\{i\}\setminus\{j\}$ is a feasible complementary basis (see Lemma~\ref{lem:pivotout}).

Since there is no secondary ray (Lemma~\ref{lem:nosecondary}), a pivot operation is possible because of Lemma~\ref{lem:infiniteray} as long as there are twin indices. By finiteness, a component in the abstract graph having an endpoint necessarily has another endpoint. It implies that the algorithm reaches at some moment a basis $B$ without twin indices. Such a basis is such that $o\notin B$ (Section~\ref{subsec:comp}), which implies that we have a solution of the program~\eqref{pb:AMNEP} with $\omega=0$, i.e. a solution of the program~\eqref{pb:MNEP}, and thus a solution of our initial problem.

%\SetKwInOut{Input}{input}\SetKwInOut{Output}{output}
%\SetKwFunction{LeavingIndex}{LeavingIndex}
%\SetKwFunction{EnteringIndex}{EnteringIndex}

\begin{algorithm}
\begin{algorithmic}
\State{\bf Input. }The matrix $\overline{M}^{\ee}$, the matrix $M$, the vectors $\boldsymbol{b}$ and $\boldsymbol{\beta}$, an initial feasible complementary basis $B^{ini}$\;
\State{\bf Output. }A feasible basis $B^{end}$ with $o\notin B^{end}$\;
\State$\phi^{\mu}(a_0,k_0)\leftarrow$ twin index in $\mathcal{M}$\;
\State$i\leftarrow\mbox{\texttt{EnteringIndex}}(B^{ini},\phi^{\mu}(a_0,k_0))$\;
\State$j\leftarrow\mbox{\texttt{LeavingIndex}}(B^{ini},i)$\;
\State$B^{curr}\leftarrow B^{ini}\cup\{i\}\setminus\{j\}$\;
\While{There are twin indices}
\State$i\leftarrow\mbox{\texttt{EnteringIndex}}(B^{curr},j)$\;
\State$j\leftarrow\mbox{\texttt{LeavingIndex}}(B^{curr},i)$\;
\State$B^{curr}\leftarrow B^{curr}\cup\{i\}\setminus\{j\}$\;
\EndWhile
\State $B^{end}\leftarrow B^{curr}$\;
\State \Return $B^{end}$\;
\end{algorithmic}
\caption{Lemke-like algorithm} \label{algo:pivot_compl}
\end{algorithm}

\subsection{Algorithm and main result} \label{subsec:lemke}

We are now in a position to describe the full algorithm under the non-degeneracy assumption.

\begin{itemize}
\item For each $k\in K$, compute a collection $\mathcal{T}=(T^k)$ where $T^k\subseteq A^k$ is an $s^k$-arborescence of $(V^k,A^k)$.
\item Define $\ee$ as in Equation~\eqref{eq:defe} (which depends on $\mathcal{T}$).
\item Define $Y=\{\phi^x(a,k):\,a\in T^k, k\in K\}\cup\{\phi^{\mu}(a,k):\,a\in A^k\setminus T^k, k\in K\}\cup\{o\}$.
\item If $Y\setminus \{o\}$ is a complementary feasible basis providing an optimal solution of the program~\eqref{pb:AMNEP} with $\omega=0$, then we have a solution of the program~\eqref{pb:MNEP}, see Lemma~\ref{lem:initbasis}.
\item Otherwise, let $B^{ini}$ be defined as in Lemma~\ref{lem:initbasis} and apply Algorithm~\ref{algo:pivot_compl}, which returns a basis $B^{end}$.
\item Compute the basic solution associated to $B^{end}$.
\end{itemize}

All the elements proved in Section~\ref{subsec:tools} lead to the following result.

\begin{theorem}\label{thm:main}
Under the non-degeneracy assumption, this algorithm solves the program~\eqref{pb:MNEP}.
\end{theorem}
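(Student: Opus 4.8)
The plan is to assemble the pieces established in Section~\ref{subsec:tools} into a correctness argument for the full algorithm. First I would dispose of the easy branch: if $Y\setminus\{o\}$ is already a complementary feasible basis whose basic solution has $\omega=0$, then by Lemma~\ref{lem:initbasis} (its first bullet) and the key remark in Section~\ref{subsec:opt} we directly have a solution of~\eqref{pb:MNEP}, and there is nothing more to prove. So assume we are in the second branch, where Lemma~\ref{lem:initbasis} furnishes the feasible complementary basis $B^{ini}=Y\setminus\{\phi^{\mu}(a_0,k_0)\}$.

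Next I would set up the abstract graph $G$ whose vertices are the feasible complementary bases of~\eqref{pb:AMNEP} and whose edges join neighbors in the sense of Section~\ref{sec:algo} (obtained from one another by a pivot on a twin index). The crucial structural fact is that every vertex of $G$ has degree at most two: given a feasible complementary basis $B$ with $o\in B$, there is a unique pair of twin indices $\phi^x(a_0,k_0),\phi^{\mu}(a_0,k_0)$ (Section~\ref{subsec:comp}), and for each of them Lemma~\ref{lem:pivotout} gives at most one feasible basis in $B\cup\{i\}$ distinct from $B$; one checks that the resulting basis, if it exists, is again complementary (the only pair that could violate complementarity is the one just entered, and the leaving index is precisely the one that restores it). Hence $G$ is a disjoint union of simple paths and cycles. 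I would then argue that $B^{ini}$ is an endpoint of a path component: by Lemma~\ref{lem:infiniteray}, if a pivot on a twin index of $B^{ini}$ fails to produce a new feasible basis, then $\mathcal P(\ee)$ contains an infinite ray from the basic solution of $B^{ini}$; since the only such ray is the primary ray $\rho^{ini}$ (this is exactly where Lemma~\ref{lem:nosecondary} is used, ruling out secondary rays), one of the two twin-index pivots is blocked and the other is not, so $B^{ini}$ has exactly one neighbor.

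From here the argument is the standard Lemke parity/finiteness argument. The set of feasible complementary bases is finite, so the path component of $G$ containing $B^{ini}$ has a second endpoint $B^{end}\neq B^{ini}$. At every intermediate basis $B$ of this path, $B$ has two twin indices, hence $o\in B$, and one twin-index pivot returns to the predecessor while the other — which is what Algorithm~\ref{algo:pivot_compl} performs via \texttt{EnteringIndex}/\texttt{LeavingIndex} — moves forward along the path; again Lemma~\ref{lem:infiniteray} together with the absence of secondary rays (Lemma~\ref{lem:nosecondary}) guarantees that this forward pivot always succeeds, so the algorithm never gets stuck at an intermediate basis. Therefore the algorithm must terminate at the endpoint $B^{end}$, which by definition of ``endpoint'' admits no twin-index pivot other than the one leading back along the path, i.e. $B^{end}$ has \emph{no} twin indices. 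By the contrapositive of the observation in Section~\ref{subsec:comp} (a complementary basis containing $o$ has twin indices), this forces $o\notin B^{end}$. Consequently the basic solution associated to $B^{end}$ has $\omega=0$, and by the key remark of Section~\ref{subsec:opt} it yields a solution of~\eqref{pb:MNEP}.

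The main obstacle I anticipate is not any single deep step but the bookkeeping needed to make the ``path and cycle'' picture airtight: one must verify that the pivot defined by Lemma~\ref{lem:pivotout} preserves complementarity along the whole walk (so that we stay inside $G$), that the entering/leaving indices computed by \texttt{EnteringIndex} and \texttt{LeavingIndex} are well defined at each step under non-degeneracy, and that the walk cannot revisit a basis (automatic once $G$ has maximum degree two and we never immediately backtrack). Each of these is routine given the lemmas of Section~\ref{subsec:tools}, but they are the substance of the proof; the non-degeneracy hypothesis is what makes all the uniqueness claims (of neighbors, of basic solutions, of the leaving index) hold, and Lemma~\ref{lem:nosecondary} is what prevents the path from ``escaping to infinity'' before reaching a solution.
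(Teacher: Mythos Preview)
Your proposal is correct and follows essentially the same route as the paper: the argument in Section~\ref{sec:algo} is exactly the path-and-cycle picture you describe, with Lemma~\ref{lem:initbasis} providing the start, Lemmas~\ref{lem:pivotout} and~\ref{lem:infiniteray} bounding degrees and handling stuck pivots, Lemma~\ref{lem:nosecondary} ruling out secondary rays so that the only endpoint besides $B^{ini}$ has $o\notin B^{end}$, and the remark of Section~\ref{subsec:opt} converting $\omega=0$ into a solution of~\eqref{pb:MNEP}. Your write-up is somewhat more explicit about the bookkeeping (complementarity preservation along the walk, non-backtracking), but the logical skeleton is identical.
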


This result provides actually a constructive proof of the existence of an equilibrium for the Multiclass Network Equilibrium Problem when the cost are affine and strictly increasing, even if the non-degeneracy assumption is not satisfied. If we compute $\boldsymbol{b}=(b_v^k)$ strictly according to the model, we have \begin{equation}\label{eq:def_b} b_v^k = \left\{\begin{array}{ll}\lambda(I^k) & \mbox{if $v=s^k$} \\ -\lambda(I^k) & \mbox{if $v=t^k$} \\ 0 & \mbox{otherwise}.\end{array}\right.\end{equation} In this case, the non-degeneracy assumption is not satisfied as it has been noted at the end of Section~\ref{subsec:init} (Remark~\ref{rem:degeneracy}).
Anyway, we can slightly perturb $\boldsymbol{b}$ and $-\boldsymbol{\beta}$ in such a way that any feasible complementary basis of the perturbated problem is still a feasible complementary basis for the original problem. Such a perturbation exists by standard arguments, see~\cite{CPS92}. Theorem~\ref{thm:main} ensures then the termination of the algorithm on a feasible complementary basis $B$ whose basic solution is such that $\omega=0$. Therefore, the algorithm solves the Multiclass Network Equilibrium Problem with affine costs in any case.\\

A consequence of Theorem~\ref{thm:main} is the following. Consider the Multiclass Network Equilibrium Problem with affine costs. If the demands $\lambda(I^k)$ and the coefficients involved in the cosrs are rational numbers, then there exists an equilibrium inducing rational flows on each arc and for each class $k$. It is reminiscent of a similar result for two-player matrix games: if the matrices involve only rational entries, there is an equilibrium involving only rational numbers \citep{Na51}.
\subsection{Computational experiments}\label{subsec:experiments}

\subsubsection{Instances}

The experiments are made on $n \times n$ grid graphs (Manhattan instances). For each pair of adjacent vertices $u$ and $v$, both arcs $(u,v)$ and $(v,u)$ are present. We built several instances on these graphs with various sizes $n$, various numbers of classes, and various cost parameters $\alpha_a^k,\beta_a^k$. The cost parameters were chosen uniformly at random such that for all $a$ and all $k$ $$\alpha_a^k\in[1,10]\quad\mbox{and}\quad\beta_a^k\in[0,100].$$ 

\subsubsection{Results}

The algorithm has been coded in C++ and tested on a PC Intel{\small\textsuperscript{\textregistered}} Core{\small\textsuperscript{\texttrademark}} i5-2520M clocked at 2.5 GHz, with 4 GB RAM. The computational results are given in Table~\ref{tab:result}. Each row of the table contains average figures obtained on five instances on the same graph and with the same number classes, but with various origins, destinations, and costs parameters. 
\begin{table}
\begin{center}
\begin{tabular}{cc|cc|ccc}
   Classes & Grid & Vertices & Arcs  & Pivots  & Algorithm~\ref{algo:pivot_compl} & Inversion \\
   & & & & &  (seconds) &  (seconds)  \\ \hline
   
   2 & 2 $\times$ 2 & 4 & 8   & 2 & $<$0.01 & $<$0.01 \\
     & 4 $\times$ 4 & 16 & 48  & 21 & 0.01 & 0.03 \\
     & 6 $\times$ 6 & 36 & 120  & 54 & 0.08 & 0.5\\
     & 8 $\times$ 8 & 64  & 224  & 129 & 0.9 & 4.0  \\ \hline
     
   3 & 2 $\times$ 2 & 4 & 8   & 4 & $<$0.01 & $<$0.01  \\
     & 4 $\times$ 4 & 16 & 48   & 33 & 0.03 & 0.1 \\
     & 6 $\times$ 6 & 36 & 120  & 97 & 0.4 & 1.9\\
     & 8 $\times$ 8 & 64  & 224  & 183 & 2.6 & 12 \\ \hline
     
   4 & 2 $\times$ 2 & 4 & 8   & 3 & $<$0.01 & $<$0.01 \\
     & 4 $\times$ 4 & 16 & 48  & 41  & 0.06 & 0.3 \\
     & 6 $\times$ 6 & 36 & 120  & 126  & 0.9 & 4.7 \\
     & 8 $\times$ 8 & 64  & 224 & 249 & 5.4 & 25 \\ \hline
     
   10 & 2 $\times$ 2 & 4 & 8   & 11 & $<$0.01 & 0.02 \\
     & 4 $\times$ 4 & 16 & 48 & 107 & 0.7 & 4.1  \\
     & 6 $\times$ 6 & 36 & 120  & 322 & 15 & 70 \\
     & 8 $\times$ 8 & 64  & 224  & 638  & 87 & 385  \\ \hline
     
   50 & 2 $\times$ 2 & 4 & 8  &  56 & 0.3 & 2.6 \\
     & 4 $\times$ 4  & 16 & 48 & 636 & 105 & 511  \\
    % & 6 $\times$ 6 & 36 & 120  &  &  & \\

\end{tabular}
\end{center}
\caption{Performances of the complete algorithm for various instance sizes}
\label{tab:result}
\end{table}

The columns ``Classes'', ``Vertices'', and ``Arcs'' contain respectively the number of classes, the number of vertices, and the number of arcs. The column ``Pivots'' contains the number of pivots performed by the algorithm. They are done during Step 5 in the description of the algorithm in Section~\ref{subsec:lemke} (application of Algorithm~\ref{algo:pivot_compl}). The column ``Algorithm~\ref{algo:pivot_compl}'' provides the time needed for the whole execution of this pivoting step. The preparation of this pivoting step requires a first matrix inversion, and the final computation of the solution requires such an inversion as well. The times needed to perform these inversions are given in the column ``Inversion''. The total time needed by the complete algorithm to solve the problem is the sum of the ``Algorithm~\ref{algo:pivot_compl}'' time and twice the ``Inversion'' time, the other steps of the algorithm taking a negligible time.

It seems that the number of pivots remains always reasonable. Even if the time needed to solve large instances is sometimes important with respect to the size of the graph, the essential computation time is spent on the two matrix inversions. The program has not been optimized, since there are several efficient techniques known for inverting matrices. The results can be considered as very positive.

\bibliographystyle{plainnat}
\bibliography{CongestionGames} 

\end{document}